\theoremstyle{definition}
\newtheorem{definition}{Definition}[section]
\newtheorem{theorem}{Theorem}[section]
\newtheorem{proposition}[theorem]{Proposition}
\newtheorem{lemma}[theorem]{Lemma}
\newcommand{\complexf}[1]{\mathcal{#1}}   
\newcommand{\cplC}{\complexf{C}}   
\newcommand{\cplD}{\complexf{D}}   
\newcommand{\cplI}{\complexf{I}}   
\newcommand{\cplO}{\complexf{O}}   
\newcommand{\keywd}[1]{\emph{#1}}       
\newcommand{\Pfont}[1]{\mathsf{#1}}
\newcommand{\Pinput}[2]{\Pfont{input}_{#1}^{#2}}
\newcommand{\Pdecide}[2]{\Pfont{decide}_{#1}^{#2}}
\newcommand{\Formulafont}[1]{\mathrm{#1}}
\newcommand{\DEC}[1]{\Formulafont{DEC}_{#1}}
\newcommand{\VALID}{\Formulafont{VALID}}
\newcommand{\KNOW}{\Formulafont{KNOW}}
\newcommand{\AGREE}[1]{\Formulafont{AGREE}_{#1}}
\newcommand{\OFUN}{\Formulafont{OFUN}}
\newcommand{\IFUN}{\Formulafont{IFUN}}
\newcommand{\Modalfont}[1]{\mathrm{#1}}
\newcommand{\ModK}[1]{\mathop{\Modalfont{K}_{#1}}} 
\newcommand{\ModD}[1]{\mathop{\Modalfont{D}_{#1}}} 
\newcommand{\ModC}[1]{\mathop{\Modalfont{C}_{#1}}} 
\newcommand{\AtomProps}{\mathrm{AP}}                 
\newcommand{\PropVar}{\mathrm{PV}}                 
\newcommand{\semFormula}[3]{\lVert{}#3\rVert^{#1}_{#2}}
\newcommand{\relD}[2]{\sim^{#1}_{#2}}
\newcommand{\actMf}[1]{\mathcal{#1}}  
\newcommand{\amSA}[1]{\actMf{SA}_{#1}}    
\newcommand{\amIS}{\actMf{IS}}    
\newcommand{\fX}{\mathop{\rtimes}}  
\newcommand{\proj}{\pi}  
\newcommand{\ISA}[1]{\cplI[\amSA{#1}]}    
\newcommand{\IIS}[1]{\cplI[\amIS^{#1}]}    
\newcommand{\PUfc}[1]{\widehat{#1}}   
\newcommand{\fchange}[2]{{#1}_{#2}}
\newcommand{\IISfc}[2]{\fchange{\IIS{#1}}{#2}}    
\newcommand{\kconc}[1]{\mathcal{R}_{#1}} 
\newcommand{\carrier}[1]{\mathit{carrier}_{#1}}
\newcommand{\Contention}{\mathit{Cont}}
\newcommand{\osp}[1]{\langle #1 \rangle}
\newcommand{\flip}[2]{\underline{#2}_{#1\strut}}
\newcommand{\Rconn}[1]{\mathrel{\frown_{#1}}}
\newcommand{\Value}{\mathit{Value}}
\newcommand{\Facet}{\mathsf{F}}
\newcommand{\coloring}{\chi}
\newcommand{\colorSet}{\Pi}
\definecolor{nodeblue}{RGB}{0,0,204}
\definecolor{nodered}{RGB}{255,204,204}
\newcommand{\nodeW}{{\circ}}
\newcommand{\nodeB}{{\color{nodeblue}\bullet}}
\newcommand{\nodeR}{{\color{nodered}\bullet}}
\newcommand{\viewOf}[1]{\mathit{view}_{#1}}
\newcommand{\PowerSet}[1]{2^{#1}}
\newcommand{\anglpair}[1]{\langle{#1}\rangle}
\newcommand{\abs}[1]{{\lvert}#1{\rvert}}
\begin{document}
\title{Proving Unsolvability of Set Agreement Task with\\ Epistemic $\mu$-Calculus} 
\author{
    Susumu Nishimura\thanks{%
    The author is supported by JSPS KAKENHI Grant Number 20K11678.}\\[0.5ex]
    {\small\texttt{susumu@math.kyoto-u.ac.jp}}
    \\[0.5ex]
    {\small Dept.\ Math, Kyoto University, Japan}
    }

\maketitle



\begin{abstract}
    This paper shows, in the framework of the \keywd{logical method}
    \cite{GoubaultLedentRajsbaum:InfComput20,DitmarschGoubaultLazicLedentRajsbaum21}, 
    the unsolvability of $k$-set agreement task    
    by devising a suitable formula of epistemic logic.
    The unsolvability of $k$-set agreement task is a well-known fact, which is a
    direct consequence of Sperner's lemma, a classic result from combinatorial topology.
    However, Sperner's lemma does not provide a good intuition for the unsolvability, 
    hiding it behind the elegance of its combinatorial statement. 
    The logical method has a merit that it can account for the reason of unsolvability
    by a concrete formula, but no epistemic formula for the general unsolvability 
    result for $k$-set agreement task has been presented so far. 
    
    We employ a variant of \keywd{epistemic $\mu$-calculus}, which extends the
    standard epistemic logic with distributed knowledge operators and
    propositional fixpoints, as the formal language of logic.
    With these extensions, we can provide an epistemic $\mu$-calculus formula
    that mentions higher-dimensional connectivity, which is essential
    in the original proof of Sperner's lemma, 
    and thereby show that $k$-set agreement tasks are not solvable even by 
    multi-round protocols. 
    Furthermore, we also show that the same formula applies to establish 
    the unsolvability for $k$-concurrency, a submodel of the 2-round protocol.


 \end{abstract} 


\section{Introduction}
\label{sec:intro}


A distributed \keywd{task} is a computational problem to 
be solved by a distributed system, which consists of 
$n+1$ mutually communicating processes.
A task specifies, for a given set of initial inputs to the processes,
what final output values are expected to be produced by the processes.
For instance, \keywd{$k$-set agreement}, which is
a fundamental distributed computing problem,
is specified as a task that produces a set of final outputs
satisfying the following properties.
\begin{description}
    \item[Agreement] At most $k$ different output values are decided by
    the $n+1$ processes;\\[-1.5em]
    \item[Validity] The output decided by each process must be one of the 
    $n+1$ input values given to the processes.
\end{description}
We say a task is solvable, 
if there exists a distributed algorithm, called a \keywd{protocol}, that conforms to the task specification. 
Throughout the paper, we argue distributed task solvability by 
wait-free protocols, 
assuming asynchronous distributed systems with crash-failure.

Research development in the last decades has found
the so called \emph{topological method} \cite{Herlihy:DCTopology}
as a means for discussing properties of distributed systems, especially 
unsolvability of distributed tasks, on the firm theoretical ground of combinatorial topology. 
In the topological method, 
a state of a distributed system of $n+1$ processes is modeled 
by an $n$-dimensional \keywd{simplex},  %
whose $n+1$ vertexes represent the local states of the $n+1$ processes.
A task or a protocol is defined as a function, called a 
\keywd{carrier map}, 
that relates each input simplex with a \keywd{(simplicial) complex}, i.e., 
a collection of output simplexes, which is intended to model the combination of
nondeterministic outputs by the $n+1$ processes.
The topological method  argues the unsolvability of a task by showing that
existence of a carrier map 
breaches certain topological invariant, 
where tools from combinatorial topology come into play.


\newtheorem{sperner}{Sperner's Lemma}
\renewcommand{\thesperner}{\unskip{}}

The topological method has been successful because of its cleanness, but 
it sometimes hides the reason of unsolvability behind the elegance of 
topological tools. For instance, one can cleanly prove the unsolvability of
the $k$-set agreement task  \cite{HerlihyShavit99,Herlihy:DCTopology}, admitting 
the following classic result from combinatorial topology. 
\begin{sperner}[\cite{Sperner28,Longueville13}]
    Suppose we are given a simplex $X$ whose vertexes are distinctly colored
    by a color set $S$.
    We say a coloring on vertexes of a subdivision of $X$ a
    \keywd{Sperner coloring}, if each vertex $v$ in the subdivision
    has the same color as a vertex of $Y$, where $Y$ is the minimal 
    face of $X$ that contains $v$. 
    Then, any subdivision of the simplex $X$ with Sperner coloring 
    contains an odd number of simplexes 
    that are fully-colored by $S$.
\end{sperner}
\noindent%
The unsolvability of $k$-set agreement task is a direct 
consequence of Sperner's lemma, but 
the statement of the lemma does not provide
a good topological intuition for the reason of unsolvability.

Goubault, Ledent, and Rajsbaum \cite{GoubaultLedentRajsbaum:InfComput20}
proposed \emph{logical method} as an alternative to the topological method.
They observed that a topological model of simplicial complex has an equivalent
counterpart of Kripke model, where 
the adjacency of simplexes in a complex is interpreted as 
the relation over Kripke states that is suitable for epistemic reasoning.
This allows us to establish the unsolvability of a task just by finding 
a formula of epistemic logic \cite{DitmarschHoekKooi:DELbook08}
that is valid in the Kripke model of the task but not valid in the model of the protocol.
Their method has a strong merit that such a formula can explicitly account for
the unsolvability. 


In its earliest development, 
the logical method has come with few concrete formulas, however.
Goubault et al.\  \cite{GoubaultLedentRajsbaum:InfComput20} 
provided such an epistemic formula for consensus task (i.e., 1-set agreement) 
and $N$-approximate agreement task but 
they left general $k$-set agreement as an open problem.
Soon later, Nishida \cite{Nishida:Msc20} devised a concrete epistemic logic formula
to show the unsolvability of $k$-set agreement and 
this formula is further refined  in
\cite{YagiNishimura:arXiv20}
to show the unsolvability under superset-closed adversary, which generalizes wait-freedom.
However, both studies solely concern 
single-round protocols, where each process in the system is allowed to 
communicate with the others by accessing the communication device only once. 
In contrast, the topological method 
works for multi-round protocols, 
where processes are allowed to communicate arbitrarily many times, 
as so usual in most of the distributed environments.

The aim of this paper is to provide a concrete formula 
that proves the unsolvability of $k$-set agreement task by 
multi-round protocols in the framework of the logical method
and thereby to express the reason for the unsolvability
in the formal language of epistemic logic.
The statement of Sperner's lemma is, however, not directly expressible
by epistemic logic, because there is no 
effective means to enumerate all relevant epistemic states in a Kripke model.
So we instead elaborate on the combinatorial structure that is 
exploited in the proof of Sperner's lemma \cite{Longueville13}.
For this, we need to extend the formal language of epistemic logic 
with the following three aspects:
\begin{description}
    \item[Epistemic $\mu$-calculus for connectivity.] 
    We extend the language of the epistemic logic 
    with propositional greatest fixpoint $\nu Z.\varphi$, which gives rise
    to the \keywd{epistemic $\mu$-calculus}
    \cite{FaginHalpernMosesVardi:book95,ShilovGaranina02}, 
    an epistemic variant of modal $\mu$-calculi \cite{Kozen83,BradfieldStriling:ModealMu07}.
    While epistemic modalities, such as the knowledge modality $\ModK{a}$,  
    in the standard epistemic logic 
    can only mentions adjacent epistemic states of finite proximity in a Kripke model,
    the epistemic $\mu$-calculus can express indirect adjacency of unbounded diameter, e.g., 
    transitive closure of adjacency relations, by means of greatest fixpoints.

    \item[Distributed knowledge for higher-dimensional connectivity.] 
    We further extend the epistemic $\mu$-calculus 
    by introducing distributed knowledge modality $\ModD{A}$, where $A$ is 
    a subset of the $n+1$ processes. 
    Topologically, while the knowledge modality $\ModK{a}$ concerns
    simplexes that are connected via a single common vertex, 
    the distributed knowledge modality $\ModD{A}$ \cite{FaginHalpernMosesVardi:book95,HalpernMoses90}
    concerns simplexes connected via a common simplex of an arbitrary dimension,
    e.g., a vertex (dimension~0 for $\abs{A}=1$), an edge (dimension~1 for $\abs{A}=2$), 
    a surface (dimension~2 for $\abs{A}=3$), etc.

    \item[Factual change of atomic propositions mentioning output decision.] 
    We carry out our development in a logical model that extends 
    the original one with 
    factual change of atomic propositions \cite{DitmarschGoubaultLazicLedentRajsbaum21}.
    The extended model provides an extra set of atomic formulas that 
    can mention the final output decision values made by processes. 
\end{description}

Sperner's lemma has been proven by examining the combinatorial structure
that is specific to a certain graph induced from Sperner coloring, 
where the graph nodes are the simplexes and the edges are 
connecting those adjacent simplexes, which share a common face. 
We paraphrase this argument on the graph structure 
into that on higher-dimensional connectivity of simplicial complex 
and construct a relevant logic formula, combining the greatest fixpoint 
and distributed knowledge modality, where 
the formula also needs to mention output decision values
in order to describe the constrains on coloring.
Specifically, assuming the solvability of $k$-set agreement, 
we show that a certain epistemic $\mu$-calculus formula entails the existence of
a cycle-free, ever-lasting connectivity path, which leads to a contradiction.

\subsection{Related work}
\label{subsec:relatedwork}

To the best of author's knowledge, 
there has been no proposal of a concrete epistemic logic formula that 
gives an account for the reason that multi-round protocols cannot solve $k$-set agreement task.
Recent developments on the logical method have shown that 
the standard epistemic logic, which employs solely knowledge modality, 
is rather weak for establishing unsolvability. In \cite{DitmarschGoubaultLazicLedentRajsbaum21}, 
a bisimulation technique is applied to show that no formula of the 
standard epistemic logic can establish the unsolvability of a certain task, 
which can be proven unsolvable by the topological method. 
Very recently, Hoshino \cite{Hoshino22} observed that, applying a simulation technique, 
the unsolvability of set agreement tasks by multi-round protocols cannot be established, 
even if the language of the logic is extended with distributed knowledge modality. 
This indicates that the formula proposed in \cite{Nishida:Msc20,YagiNishimura:arXiv20},
which makes use of distributed knowledge modality, 
is a peculiar instance that works solely for single-round protocols.


In \cite{DitmarschGoubaultLazicLedentRajsbaum21}, 
Ditmarsch et al.\ proposed to extend the logical model with factual change 
of atomic propositions \cite{GoubaultLedentRajsbaum:InfComput20}
and provided a generic formula of the extended logic that works
for virtually every task whose unsolvability can be established by a topological argument.
However, this generic formula contains no epistemic modality within
and therefore provides no hints for the reason of unsolvability.
In the present study, we also assume the logical model extended with factual change 
but present a formula that contains explicit epistemic contents 
expressing higher-dimensional connectivity.



From the perspective of topological method, 
it has been observed that
higher-dimensional connectivity, a.k.a., $k$-connectivity \cite{Herlihy:DCTopology,HerlihyRajsbaum00},
is essential for the unsolvability of $(k+1)$-set agreement task.
The present work expresses a similar (but topologically less general)
notion of higher-dimensional connectivity 
in the language of epistemic $\mu$-calculus.

\subsection{Organization}
The rest of this paper is organized as follows.
Section~\ref{sec:logicDC} reviews the topological model of distributed computing
and introduces the epistemic $\mu$-calculus for reasoning about distributed systems.
Section~\ref{sec:PUmodelkSet} gives the formal definition of product update models,
on which the unsolvability of $k$-set agreement task is argued. 
In Section~\ref{sec:IISunsolvable}, we prove the unsolvability 
with a concrete formula that encodes the essence of Sperner's lemma 
in the language of epistemic $\mu$-calculus.
We further discuss that the same method applies to a submodel of $k$-concurrency
in Section~\ref{sec:kconc}. 
%




\section{Logical Approach to Distributed Computing}
\label{sec:logicDC}

In what follows, $[\ell,m]$ denotes a range of nonnegative integers,
namely, the set $\{\ell,\ell+1, \ldots, m\}$.
We consider a
distributed system of $n+1$ processes, where
each individual process is distinguished by a unique process id, taken from the set 
$\colorSet=[0,n]$.
We say `process~$a$' to refer to the process identified
by the unique id $a\in\colorSet$. 
We also write $\Value$ to denote the set of possible initial input 
values given to each process. 
In the sequel, we assume $\Value=\colorSet$, 
renaming the input values as necessary. 

\subsection{Simplicial topology for distributed computing}
\label{subsec:simplicialDC}

In the topological theory of distributed computing \cite{Herlihy:DCTopology},
a particular state of an individual process is represented
by a pair $(a,v) \in \colorSet\times \Value$ to mean that 
a process~$a$ has a private value~$v$. 
The pair $(a,v)$ is regarded as a vertex of color~$a$ 
and we define the \keywd{coloring function} $\coloring$ on colored vertexes by $\coloring((a,v))=a$. 
A \keywd{simplex} is a collection of vertexes of pairwisely distinct colors
and a simplex consisting of $d+1$ vertexes is called a $d$-dimensional simplex.
A simplex $Y$ is called a \keywd{face} of a simplex $X$, if $Y\subseteq X$.
For a simplex $X$ and a color $a\in\coloring(X)$,
we define $\viewOf{a}(X)=v$ iff $(a,v)\in X$. 

A \keywd{simplicial complex} (or \keywd{complex} for short) $\cplC$ is 
a finite collection of simplexes closed under set inclusion: 
$X\in\cplC$ and $Y\subseteq X$ implies $Y\in\cplC$.
A maximal simplex in $\cplC$ is called a \keywd{facet}. 
Particularly, a distributed system of $n+1$ processes is modeled 
by a complex whose every facet is a simplex 
$\{(0,v_0), (1,v_1), \ldots, (n,v_n)\}$ of dimension $n$, 
where the $n+1$ vertexes are distintively colored through $0$ to $n$.
%
We write $\Facet(\cplC)$ (resp., $V(\cplC)$)
to denote the set of facets (resp., vertexes) contained in $\cplC$.
Suppose we are given complexes $\cplC$ and $\cplD$ colored by $\colorSet$. 
A \keywd{simplicial} map $\delta: V(\cplC) \to V(\cplD)$
is a color-preserving function on vertexes such that 
$\coloring(\delta(v))=\coloring(v)$
for every $v\in V(\cplC)$ and also 
$\delta(X)\in\cplD$ for every $X\in\cplC$.
We also define the cartesian product $\cplC\times\cplD$ as follows. 
For each pair of facets $X\in \Facet(\cplC)$ and $Y\in \Facet(\cplD)$,
we define $X\times Y$ as a simplex
    $X\times Y 
    = \{ 
        (a, (u,v)) \mid (a,u)\in X, (a,v)\in Y, a \in \colorSet
        \}$
so that each vertex in $X\times Y$ 
is given a pair of private values taken from the 
vertexes of matching color in $X$ and $Y$.
The cartesian product $\cplC\times\cplD$ 
is a complex determined by the set of facets
$\Facet(\cplC\times\cplD)=
\{X\times Y \mid
X\in\Facet(\cplC), Y\in\Facet(\cplD)\}$.

\subsection{Epistemic logic for distributed computing}
\label{subsec:epiLogicDC}



\subsubsection{The epistemic \texorpdfstring{$\mu$-calculus}{mu-calculus}}
\label{subsubsec:epistemiModel}


 
The syntax of the epistemic $\mu$-calculus formulas
considered in this paper is given by: 
\[
    \varphi::=
    p  \mid  \neg p \mid Z \mid \varphi\vee\varphi  \mid \varphi\wedge\varphi
    \mid \ModD{A}\varphi 
    \mid \nu Z.\varphi,
\]
where $p$ ranges over the set $\AtomProps$ of atomic propositions,
$Z$ ranges over the set $\PropVar$ of propositional variables, and
$A$ ranges over $\PowerSet{\colorSet}$. 

Notice that the formulas are restricted to \keywd{positive} ones:
Only atomic formulas are allowed to be negated by $\neg$.
In what follows, we say $\varphi$
is a propositional formula, 
if it does not contain any propositional variables, 
epistemic modalities, or greatest fixpoints.  
As usual convention, we write $\varphi \Rightarrow \psi$ 
for $\neg \varphi \vee \psi$, where $\varphi$ is a propositional formula
and $\psi$ is a positive formula.

\begin{figure}
    {
    \begin{align*}
        \semFormula{M}{\rho}{p} = {} & \{X\in S \mid p\in L(X)\}
        \qquad\quad
        \semFormula{M}{\rho}{Z} = {} \rho(Z)
        \qquad\quad
        \semFormula{M}{\rho}{\neg p} = {}  
        S\setminus \semFormula{M}{\rho}{p}  
        \\
        \semFormula{M}{\rho}{\varphi_1\vee\varphi_2} = {}  &
        \semFormula{M}{\rho}{\varphi_1} \cup
        \semFormula{M}{\rho}{\varphi_2}
        \qquad\qquad 
        \semFormula{M}{\rho}{\varphi_1\wedge\varphi_2} = {}  
        \semFormula{M}{\rho}{\varphi_1} \cap
        \semFormula{M}{\rho}{\varphi_2}
        \\
        \semFormula{M}{\rho}{\ModD{A} \varphi} = {} & 
        \{X\in S \mid Y\in \semFormula{M}{\rho}{\varphi} 
        \text{ for every $Y$ such that $Y\relD{}{A}X$} \}
        \\
        \semFormula{M}{\rho}{\nu Z.\varphi} = {} & 
        \bigcup 
        \{ S'\in \PowerSet{S} \mid S'\subseteq \semFormula{M}{\rho[S'/Z]}{\varphi} \}
    \end{align*}
    }\vspace*{-2em}
    \caption{Kripke semantics of the epistemic $\mu$-calculus}
    \label{fig:semantMuCalc}
\end{figure}

A \keywd{Kripke frame} is a pair $\anglpair{S,\sim}$,  where
 $S$ is the set of (epistemic) \keywd{states} and 
$\sim$ is a family $\{\sim_a \subseteq S\times S\mid a\in\colorSet\}$ of 
\keywd{indistinguishability relations}, with
each $\sim_a$ being an equivalence relation over $S$.
The Kripke model $M = \anglpair{S,\sim,L}$  augments the Kripke frame with 
a function $L: S\to \PowerSet{\AtomProps}$, which assigns, 
for each $X\in S$, 
a set $L(X)$ of atomic propositions that are true at the state $X$.
For a subset $A$ of $\colorSet$, 
we write $\relD{}{A}$ for a derived equivalence relation defined by 
$X \relD{}{A} Y$ iff $X \sim_a Y$ for all $a\in A$. 

The formal semantics of a formula $\varphi$ is given 
by the set $\semFormula{M}{\rho}{\varphi}$ of states at which $\varphi$ is satisfied,
where $\rho: \PropVar \to \PowerSet{S}$ gives an \keywd{interpretation}
of propositional variables that occur free in $\varphi$. 
Figure~\ref{fig:semantMuCalc} gives the definition of the semantics
by induction on the structure of formula $\varphi$.
In addition to propositional formulas, 
the epistemic $\mu$-calculus provides an
epistemic modality $\ModD{A}\varphi$, which is intended to express
the distributed knowledge 
asserting ``the group $A$ of processes know $\varphi$.''
It further provides greatest fixpoint $\nu Z.\varphi$, which denotes
the greatest solution satisfying the equation $Z = \varphi$ on the propositional variable $Z$
\cite{FaginHalpernMosesVardi:book95,ShilovGaranina02}.\unskip\footnote{%
    The least fixpoint $\mu Z.\varphi$, which is logically equivalent to $\neg \nu Z.\neg \varphi$,
    is missing from the present logic, because it is not a positive formula.}

%

In what follows, we write $M, X \models_{\rho} \varphi$
iff $X\in \semFormula{M}{\rho}{\varphi}$, to mean
that, $\varphi$ is \keywd{satisfied} at a particular state $X$ of Kripke model $M$, 
under an interpretation $\rho$.  
In particular, we write $M, X \models \varphi$, when 
$\varphi$ is a \keywd{closed} formula, i.e., $\varphi$ contains no free occurrences of 
propositional variables. 
We also write $M \models \varphi$ to mean 
a closed formula $\varphi$ is \keywd{valid}, that is,
$M, X \models \varphi$ holds at every state $X\in S$ of the Kripke model $M$. 

\subsubsection{Simplicial model: a Kripke model induced from simplicial complex}
\label{subsubsec:simplModelCpl}

The topological structure of a complex can be turned into 
a Kripke model, called a \keywd{simplicial model} \cite{GoubaultLedentRajsbaum:InfComput20}.
\begin{definition}
    \label{def:simplicialModel}
    A complex $\cplC$ induces a Kripke model  
    $M=\anglpair{\Facet(\cplC),\sim,L}$, called a \keywd{simplicial model}, where 
    \begin{itemize}
        \item the set of (epistemic) states is the set of facets $\Facet(\cplC)$ of $\cplC$,
        \item $\sim$ is a family of relations $\{\sim_a 
        \mid a\in \colorSet\}$ over $\Facet(\cplC)$, 
        where each $\sim_a$, called an
        \keywd{indistinguishability relation}, is an equivalence relation
        defined by  $X \sim_a Y \text{ iff } a \in \coloring(X\cap Y)$, 
        and
        \item $L: \Facet(\cplC)\to\PowerSet{\AtomProps}$,
        where $\AtomProps= \{ \Pinput{a}{v} \mid a\in\colorSet, v\in\Value\}$, 
        is a function defined by $L(X) = \{\Pinput{a}{v} \mid (a,v)\in X\}$.
    \end{itemize}
\end{definition}

By abuse of notation, we write $\cplC$ to refer to both 
the complex $\cplC$ itself and the simplicial model $M=\anglpair{\cplC,\sim,L}$
induced from it.
Simplicial models indeed have the identical structure 
with local proper Kripke models \cite{GoubaultLedentRajsbaum:InfComput20}, 
up to categorical equivalence 
under a suitable categorical setting.



\begin{figure} 
    \hfil%
    \begin{minipage}[b]{.4\linewidth} 
        \centering
        \includegraphics[scale=0.66]{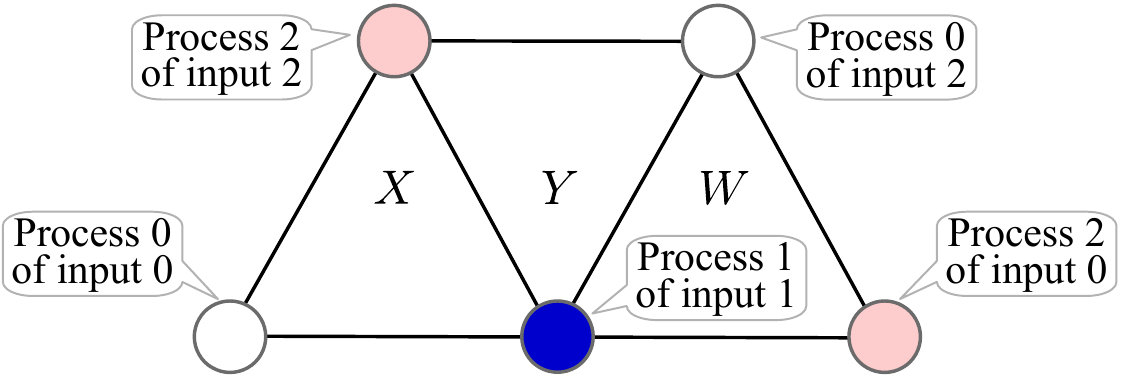}
        \subcaption{A simplical complex $\cplC$}
        \label{subfig:complexModel}
    \end{minipage}%
    \hfil%
    \hfil%
    \begin{minipage}[b]{.42\linewidth}
        \centering
        \includegraphics[scale=0.72]{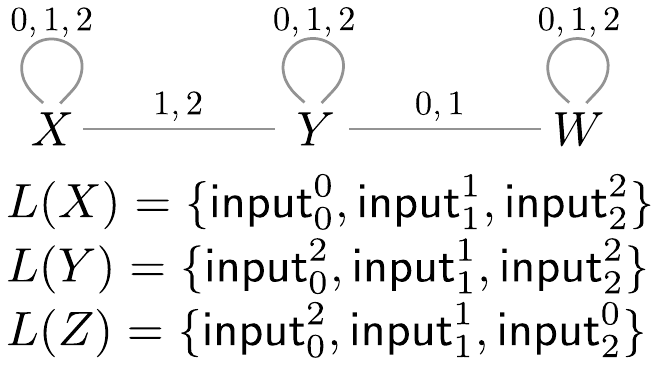}
        \subcaption{The corresonding Kripke model $\anglpair{\Facet(\cplC),\sim,L}$}
        \label{subfig:KripkeModel} 
    \end{minipage} 
    \caption{A topological model and its corresponding Kripe model for a 3~process system}
    \label{fig:simpModel} 
\end{figure}

Fig.~\ref{fig:simpModel} illustrates how a complex
is interpreted as a Kripke model. 
Fig.~\ref{subfig:complexModel} gives a 2-dimensional complex
consisting of three different facets, which  
models a 3~process system of process ids $\colorSet=\{0,1,2\}$. 
Each vertex receives its own private input as indicated in the figure
and each facet has 3 vertexes, as designated by 
vertexes of different colors $\nodeW$, $\nodeB$, and $\nodeR$, corresponding to
processes~$0$, $1$, and $2$, respectively. 
(Throughout the rest of the paper, we will follow this coloring convention.)
Fig.~\ref{subfig:KripkeModel} shows the corresponding simplicial Kripke model. 
The Kripke frame is depicted by an undirected graph, where 
each graph node stands for a state, corresponding to a facet in $\cplC$, 
and each graph edge labeled by $a\in\colorSet$ stands for a pair of nodes
related by $\sim_a$. 
(As such, a simplicial model can be pictorized by an undirected graph 
with self-cycles, because epistemic knowledge is modeled by an 
\textbf{S5} Kripke model, in which 
$\sim_a$ is an equivalence relation for each $a\in\colorSet$.)
The function $L$ determines 
the input value given to each vertex in each facet by the set of atomic formulas
of the form $\Pinput{a}{v}$, as in the figure.

Though the epistemic logic primarily concerns epistemic information in Kripke models, 
geometric interpretation on the original complexes, 
from which the simplicial models are induced, is insightful 
for the later development in this paper. 
%
%
%
A pair of facets are related by $\sim_a$ in an induced simplicial model
$\anglpair{\Facet(\cplC),\sim,L}$, 
as it immediately follows from the definition, 
if they share a common vertex of color~$a$ in the complex $\cplC$. 
Similarly, the derived relation 
$\relD{}{A}$ relates a pair of facets if they share a common face $U$ %
such that $\coloring(U) = A$. For example, in the complex of
Fig.~\ref{subfig:complexModel}, 
$X \relD{}{\{1,2\}} Y$ holds, because they share a 1-dimensional simplex (i.e., an edge
that connects $\nodeB$ and $\nodeR$).
In contrast, $X$ and $W$ are not related by  
$\relD{}{\{1,2\}}$ but by 
$\relD{}{\{1\}}$ via the common 0-dimensional simplex (i.e., vertex $\nodeB$).\footnote{%
    The distributed knowledge operator $\ModD{\{a\}} \varphi$ for a singleton set of processes
    is known as the knowledge operator $\ModK{a} \varphi$. 
    The knowledge operator is more popular in the literature of epistemic logic 
    but is omitted in the present logic, as it is 
    a special form of the distributed knowledge.}
Therefore $\cplC, X \models \ModD{\{1,2\}}\Pinput{2}{2}$
is satisfied because $\relD{}{\{1,2\}}$ relates $X$ with itself and $Y$,
while $\cplC, X \not\models \ModD{\{1\}}\Pinput{2}{2}$
because $X \relD{}{\{1\}} W$ but
$\cplC, W \not\models \Pinput{2}{2}$.

The greatest fixpoint provides extra power in expressing epistemic property such as
the common knowledge and the common distributed knowledge \cite{FaginHalpernMosesVardi:book95,BatlagSmets20}.
The common knowledge, written $\ModC{A} \psi$ for a subset $A$ of $\colorSet$, 
can be defined as the greatest fixpoint $\nu Z.(\psi \wedge \bigwedge_{a\in A}\ModD{\{ a\}} Z)$,
which can be understood as an infinite conjuction in repeated unfolding
$\psi \wedge \bigwedge_{a\in A}\ModD{\{ a\}} 
    \Bigl(\psi \wedge \bigwedge_{a\in A}\ModD{\{ a\}} 
        \bigl(\psi \wedge \bigwedge_{a\in A}\ModD{\{ a\}} (\ldots)\bigr)\Bigr)$.
This defines the common knowledge $\ModC{A}\psi$ as a property that 
$\psi$ is satisfied at every state in the reach of 
the trasitive closure of the union relation $\bigcup_{a\in A} \sim_a$. 
For example, in the complex $\cplC$ of Fig.~\ref{subfig:complexModel}, 
$\cplC, X\not\models \ModC{\{0,2\}}\Pinput{2}{2}$ 
but $\cplC, X\models \ModC{\{2\}}\Pinput{2}{2}$, 
because $X \sim_2 Y \sim_0 W$ and hence $W$ is in the reach of the former 
but not the latter.



In the subsequent sections, task solvability will be defined with the notion
of morphisms.
Let $\anglpair{\Facet(\cplC),\sim,L}$ and $\anglpair{\Facet(\cplD),\sim',L'}$
be simplicial models induced from complexes $\cplC$ and $\cplD$, respectively. 
We say a function $\delta: V(\cplC)\to V(\cplD)$ is a \keywd{morphism}, if 
$\delta$ is a color-preserving simplicial map and furthermore
$L(X)=L'(\delta(X))$ holds 
for every $X\in\Facet(\cplC)$. 


The knowledge gain theorem, 
which is essential for showing task unsolvability in the logical method, 
is conservatively extended to allow the additional logical connectives, i.e.,
greatest fixpoints and distributed knowledge modalities, in 
the epistemic $\mu$-calculus.
%
\begin{restatable}[knowledge gain]{theorem}{knowledgeGain}\label{th:knowledgegain}
    Suppose 
    $\cplC=\anglpair{\Facet(\cplC),\sim,L}$ and 
    $\cplD=\anglpair{\Facet(\cplD),\sim',L'}$ are simplicial models
    and $\delta$ is a morphism from $\cplC$ to $\cplD$.  
    Then, for any state $X\in \Facet(\cplC)$ and closed positive formula $\varphi$,
    $\cplD,\delta(X)\models\varphi$ implies $\cplC,X\models\varphi$.
\end{restatable}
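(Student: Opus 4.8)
The plan is to prove the statement by structural induction on the closed positive formula $\varphi$, showing the stronger parametrized claim that for every formula (possibly with free propositional variables) and every interpretation $\rho$ on $\cplD$, if $X\in\semFormula{\cplC}{\rho\circ\delta}{\varphi}$ whenever $\delta(X)\in\semFormula{\cplD}{\rho}{\varphi}$, where $\rho\circ\delta$ is the pulled-back interpretation defined by $(\rho\circ\delta)(Z)=\{X\in\Facet(\cplC)\mid \delta(X)\in\rho(Z)\}$. Tracking free variables this way is essential because the greatest-fixpoint case forces us to reason about open formulas even when the overall statement concerns closed ones.

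The base cases are where the morphism condition does the work. For an atomic proposition $p$, the hypothesis $L(X)=L'(\delta(X))$ gives $p\in L(X)\iff p\in L'(\delta(X))$, so $\delta(X)\in\semFormula{\cplD}{\rho}{p}$ implies $X\in\semFormula{\cplC}{\rho\circ\delta}{p}$ immediately; the negated atom $\neg p$ follows by the same equality. The propositional-variable case $Z$ holds by the very definition of $\rho\circ\delta$. The connectives $\vee$ and $\wedge$ are routine: they commute with the direct-image condition because the inductive hypotheses for $\varphi_1,\varphi_2$ combine under union and intersection, and positivity guarantees no negation disrupts the implication's direction.

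The two genuinely interesting cases are $\ModD{A}$ and $\nu Z$. For $\ModD{A}\varphi$, I would use the fact that a color-preserving simplicial map sends an $A$-shared face to an $A$-shared face: if $Y\relD{}{A}X$ in $\cplC$, then $\delta(Y)\relD{}{A}\delta(X)$ in $\cplD$, since $A\subseteq\coloring(X\cap Y)$ and $\delta$ preserves colors forces $A\subseteq\coloring(\delta(X)\cap\delta(Y))$. So assuming $\delta(X)\in\semFormula{\cplD}{\rho}{\ModD{A}\varphi}$, every $A$-neighbor of $\delta(X)$ satisfies $\varphi$; in particular $\delta(Y)$ does for each $A$-neighbor $Y$ of $X$, and the induction hypothesis applied to $\varphi$ then yields $Y\in\semFormula{\cplC}{\rho\circ\delta}{\varphi}$, so $X\in\semFormula{\cplC}{\rho\circ\delta}{\ModD{A}\varphi}$. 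For the greatest fixpoint, I would show that the pullback of any post-fixpoint of the operator induced by $\varphi$ on $\cplD$ is a post-fixpoint of the operator on $\cplC$: if $S'\subseteq\semFormula{\cplD}{\rho[S'/Z]}{\varphi}$, then its preimage $\delta^{-1}(S')$ satisfies $\delta^{-1}(S')\subseteq\semFormula{\cplC}{(\rho\circ\delta)[\delta^{-1}(S')/Z]}{\varphi}$ by the induction hypothesis (noting $(\rho\circ\delta)[\delta^{-1}(S')/Z]=(\rho[S'/Z])\circ\delta$), whence $\delta^{-1}(S')$ is contained in the greatest fixpoint over $\cplC$; taking $S'$ to be the greatest fixpoint over $\cplD$ delivers the claim.

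The main obstacle I expect is bookkeeping in the fixpoint case: one must verify carefully that substituting into the interpretation commutes with the pullback, i.e.\ that $(\rho\circ\delta)[\delta^{-1}(S')/Z]$ coincides with $(\rho[S'/Z])\circ\delta$, so that the induction hypothesis applies to the \emph{same} pulled-back interpretation. Once that identity is in place, monotonicity of the semantic operator (guaranteed by positivity, so no monotonicity is ever lost to a stray negation) makes the post-fixpoint argument go through cleanly. The modality case depends only on the elementary observation that color-preservation transports shared faces forward, which is exactly the simplicial-map condition; I anticipate no difficulty there beyond stating it precisely.
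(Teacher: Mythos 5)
Your proposal is correct and takes essentially the same route as the paper's proof of this theorem: a structural induction on positive formulas, strengthened to open formulas by pulling interpretations back along $\delta$ (i.e., $Z \mapsto \delta^{-1}(\rho(Z))$). The three points your induction turns on — the label equality $L(X)=L'(\delta(X))$ handling both atoms and negated atoms, forward preservation of $\relD{}{A}$ under color-preserving simplicial maps for the $\ModD{A}$ case, and preimages of post-fixpoints remaining post-fixpoints via the identity $(\rho[S'/Z])\circ\delta=(\rho\circ\delta)[\delta^{-1}(S')/Z]$ for the $\nu Z.\varphi$ case — are exactly the ingredients of the paper's conservative extension of the standard knowledge-gain argument.
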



\section{Product Update Models for \texorpdfstring{$k$-Set}{k-Set} Agreement}
\label{sec:PUmodelkSet}

In the logical method, distributed computation is modeled 
by \keywd{product updates}, which have been used for defining
update on the knowledge model
in dynamic epistemic logic (DEL) \cite{BaltagMossSolecki16,DitmarschHoekKooi:DELbook08}.
This section presents concrete instances of product updates 
that are relevant for discussing $k$-set agreement.

\subsection{Product updates and task solvability}
\label{subsec:PUsolv}

Suppose $\cplC$ and $\cplD$ are complexes. 
A product update model is derived from a complex $\cplC[\cplD]$, 
which is a subcomplex of the the cartesian product $\cplC\times \cplD$
(i.e., $\Facet(\cplC[\cplD])\subseteq \Facet(\cplC\times \cplD)$). 
The subcomplex $\cplC[\cplD]$ can be regarded as a specification of distributed computation: 
$X \times Y \in \Facet(\cplC[\cplD])$ implies that 
$Y\in \Facet(\cplD)$ is one of possible outputs for an input $X\in \Facet(\cplC)$. 
In other words, $\cplC[\cplD]$ is a binary relation over $\Facet(\cplC)\times\Facet(\cplD)$,
which encodes a one-to-many mapping from input facets to output facets. 
(Formally, such a subcomplex $\cplC[\cplD]$ is determined by an epistemic formula,
called a \keywd{precondition}, defined for each $Y\in \Facet(\cplD)$. 
See \cite{GoubaultLedentRajsbaum:InfComput20,DitmarschGoubaultLazicLedentRajsbaum21}
for the formal definition.)

Let $\cplC[\cplD]$ be a subcomplex of $\cplC\times \cplD$ and
$\anglpair{\Facet(\cplC),\sim^{\cplC},L}$ 
be the simplicial model induced from $\cplC$.
A \keywd{product update model} is a simplicial model 
$\anglpair{\Facet(\cplC[\cplD]),\sim^{\cplC[\cplD]},L'}$, 
where $\anglpair{\Facet(\cplC[\cplD]),\sim^{\cplC[\cplD]}}$ is the Kripke frame induced from $\cplC[\cplD]$
and $L'$ is defined by $L'(X\times Y) = L(X)$ for every $X\times Y\in \Facet(\cplC[\cplD])$.

In what follows, we write $\cplI$ to denote the input complex, 
i.e., a complex whose set of facets is 
$\Facet(\cplI) = \bigl\{ \{(0,v_0),\ldots,(n,v_n)\} \mid v_0,\ldots,v_n\in \colorSet\bigr\}$.
To argue if a task is solvable by a protocol in the logical method, 
we model the task and the protocol by 
product update models $\cplI[\complexf{T}]$ and 
by $\cplI[\complexf{P}]$, respectively, 
where $\complexf{T}$ is the output complex for the task and
$\complexf{P}$ is that for the protocol. 
The task solvability is defined as below, via the notion of morphism, as follows. 
\begin{definition}[task solvability\cite{GoubaultLedentRajsbaum:InfComput20}] \label{def:PUtasksolavbility}
    A task $\cplI[\complexf{T}]$ is solvable by a protocol $\cplI[\complexf{P}]$
    iff there exists a morphism $\delta:\cplI[\complexf{P}]\to \cplI[\complexf{T}]$ 
    such that $\proj_{\cplI}\circ \delta = \proj_{\cplI}$,
    where $\proj_{\cplI}$ is the first projection, i.e., 
    $\proj_{\cplI}((X\times Y))=X$.
\end{definition}

The product update model presented above, however, does not suffice for 
showing the unsolvability of $k$-set agreement, as 
we have mentioned in Section~\ref{sec:intro}. 
We need to extend it with factual change \cite{DitmarschGoubaultLazicLedentRajsbaum21}
so that the output decision values can be mentioned by atomic propositions.


Let $\cplI[\complexf{T}]$ denote the product update model 
$\anglpair{\Facet(\cplI[\complexf{T}]), \sim^{\cplI[\complexf{T}]}, L}$ 
for the task and $\cplI[\complexf{P}]$ denote the product update model 
$\anglpair{\Facet(\cplI[\complexf{P}]), \sim^{\cplI[\complexf{P}]}, L'}$ 
for the protocol.
Let us assume an augmented set of atomic propositions
$\PUfc{\AtomProps} = \AtomProps \cup 
\{\Pdecide{a}{v} \mid a\in\colorSet, v\in\Value\}$,
where $\Pdecide{a}{v}$ is an atomic proposition asserting that
process~$a$ decides $v$ as its output value. 
Then 
the product update model with factual change for the task, written 
$\PUfc{\cplI[\complexf{T}]}$,  is a modified product update model
$\anglpair{\Facet(\cplI[\complexf{T}]), \sim^{\cplI[\complexf{T}]}, \PUfc{L}}$, 
where $\PUfc{L}$ augments the assignment of true atomic propositions by
$\PUfc{L}(X\times Y) = \{\Pinput{a}{v} \mid (a,v)\in X  \} \cup \{\Pdecide{a}{d} \mid (a,d)\in Y  \}$.
Moreover, for each given morphism $\delta: \cplI[\complexf{P}] \to \cplI[\complexf{T}]$, 
the product update model with factual change for the protocol, written 
$\fchange{\cplI[\complexf{P}]}{\delta}$, is a modified product update model
$\anglpair{\Facet(\cplI[\complexf{P}]), \sim^{\cplI[\complexf{P}]}, \PUfc{L}'}$,
where $\PUfc{L}'(X\times Y) = \PUfc{L}(\delta(X\times Y))$
for every $X\times Y \in \Facet(\cplI[\complexf{P}])$.

%

To show the unsolvability of $k$-set agreement, 
we will resort to the following property of product update with factual change.
(This claim follows from Theorem~\ref{th:knowledgegain} and 
certain properties of  product updates with factual change.
For the details, see the discussion in the proof of Theorem~19 in
\cite{DitmarschGoubaultLazicLedentRajsbaum21}.)
\begin{theorem} \label{th:PUlogical} 
    Let $\cplI[\complexf{T}]$ and $\cplI[\complexf{P}]$ be the product update models 
    of a task and a protocol, respectively. Suppose 
    there exists a morphism $\delta: \cplI[\complexf{P}] \to \cplI[\complexf{T}]$. 
    Then, $\PUfc{\cplI[\complexf{T}]}, \delta(X) \models \varphi$ implies 
    $\fchange{\cplI[\complexf{P}]}{\delta}, X\models \varphi$, for any 
    $X\in\Facet(\fchange{\cplI[\complexf{P}]}{\delta})$ and positive formula $\varphi$. 
\end{theorem}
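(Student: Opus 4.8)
The plan is to reduce the statement to the knowledge gain theorem (Theorem~\ref{th:knowledgegain}), exploiting that passing to factual change alters only the labelling of the models involved and never their Kripke frames. Concretely, $\PUfc{\cplI[\complexf{T}]}$ and $\fchange{\cplI[\complexf{P}]}{\delta}$ are simplicial models sharing the very frames $\anglpair{\Facet(\cplI[\complexf{T}]),\sim^{\cplI[\complexf{T}]}}$ and $\anglpair{\Facet(\cplI[\complexf{P}]),\sim^{\cplI[\complexf{P}]}}$ of the underlying product update models; they merely reinterpret the states over the augmented set $\PUfc{\AtomProps}=\AtomProps\cup\{\Pdecide{a}{v}\}$, replacing $L$ and $L'$ by $\PUfc{L}$ and $\PUfc{L}'$. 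Since the proof of Theorem~\ref{th:knowledgegain} inspects only (i) the fact that $\delta$ is a color-preserving simplicial map, hence relation-respecting, and (ii) label preservation along $\delta$, it applies to any labelling, in particular the augmented one. So it suffices to exhibit $\delta$ as a morphism $\fchange{\cplI[\complexf{P}]}{\delta}\to\PUfc{\cplI[\complexf{T}]}$.

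First I would observe that the structural half of the morphism conditions is inherited for free: because the two factual-change models carry exactly the frames of $\cplI[\complexf{P}]$ and $\cplI[\complexf{T}]$, and $\delta$ is, by hypothesis, a color-preserving simplicial map realizing the given morphism $\cplI[\complexf{P}]\to\cplI[\complexf{T}]$, the same vertex map is still color-preserving and simplicial, and still sends $\sim^{\cplI[\complexf{P}]}$-related facets to $\sim^{\cplI[\complexf{T}]}$-related facets. Nothing here depends on the labelling.

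Next I would verify the labelling half, $\PUfc{L}'(X)=\PUfc{L}(\delta(X))$ for every facet $X\in\Facet(\fchange{\cplI[\complexf{P}]}{\delta})$. But this is precisely the definition of the factual-change protocol model, since $\PUfc{L}'$ is declared as the pullback $\PUfc{L}'(X\times Y)=\PUfc{L}(\delta(X\times Y))$. Thus label preservation holds by construction, and $\delta$ is a morphism between the factual-change models. Applying the (labelling-agnostic) knowledge gain theorem then yields, for every closed positive $\varphi$, that $\PUfc{\cplI[\complexf{T}]},\delta(X)\models\varphi$ implies $\fchange{\cplI[\complexf{P}]}{\delta},X\models\varphi$; the restriction to positive formulas is essential, since a negation would reverse the implication, and the implicit closedness matches the $\models$ notation, with any free propositional variables handled by carrying a common interpretation $\rho$ through $\delta$.

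The step I expect to carry the real weight is making precise that factual change introduces no genuinely new obligation beyond this pullback identity, that is, confirming that the decide-atoms $\Pdecide{a}{d}$ assigned to a protocol facet $X\times Y$ are exactly those its image $\delta(X\times Y)$ carries in $\PUfc{\cplI[\complexf{T}]}$, so that the only hypothesis of Theorem~\ref{th:knowledgegain} needing fresh checking, label preservation, is already built into the definition of $\PUfc{L}'$. Once this compatibility of $\delta$ with the output-decision labelling is spelled out, the conclusion is immediate; this is the content of the ``certain properties of product updates with factual change'' alluded to in the statement, and it is exactly the route taken in the proof of Theorem~19 of \cite{DitmarschGoubaultLazicLedentRajsbaum21}.
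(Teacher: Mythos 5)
Your proposal is correct and follows essentially the same route as the paper, which derives the statement from the knowledge gain theorem (Theorem~\ref{th:knowledgegain}) together with the observation that factual change only modifies labellings, deferring details to Theorem~19 of \cite{DitmarschGoubaultLazicLedentRajsbaum21}. Your key step---that $\PUfc{L}'$ is by definition the pullback of $\PUfc{L}$ along $\delta$, so $\delta$ is automatically a morphism of the factual-change models and the (labelling-agnostic) knowledge gain theorem applies---is exactly the content the paper leaves implicit in that citation.
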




\subsection{Product update model 
    \texorpdfstring{$\ISA{k}$}{I[SAk]}
    for \texorpdfstring{$k$-set}{k-set} agreement task}
\label{subsec:PUmodelkSA}


The output complex of $k$-set agreement task is specified by 
the set of facets $\Facet(\cplO_k)= \bigl\{ \{(0,d_0), \ldots, (n,d_n)\} \bigm\vert
\abs{\{d_0, \ldots, d_n\}}\leq k, d_0, \ldots, d_n\in\colorSet \bigr\}$.
Hence the product update model $\ISA{k}$ is given by 
$\langle\Facet(\ISA{k}),$ $\sim^{\ISA{k}},L\rangle$, 
where 
$\Facet(\ISA{k}) = \{ I\times O \mid I\in\Facet(\cplI), O\in\Facet(\cplO_k), 
    \{v \mid (a,v)\in I\}\supseteq \{d \mid (a,d)\in O\} \}$
is the set of facets respecting the validity condition.


This product update model can be extended to the one with factual change
$\PUfc{\ISA{k}} = \langle \Facet(\ISA{k}), \sim^{\ISA{k}}$, $\PUfc{L} \rangle$,
where
$\PUfc{L}(I\times O) = \{\Pinput{a}{v} \mid (a,v)\in I\} 
\cup \{\Pdecide{a}{d} \mid (a,d)\in O\}$
for each $I\times O \in \Facet(\ISA{k})$.  


\begin{figure}[t] \centering
    \includegraphics[width=0.48\textwidth]{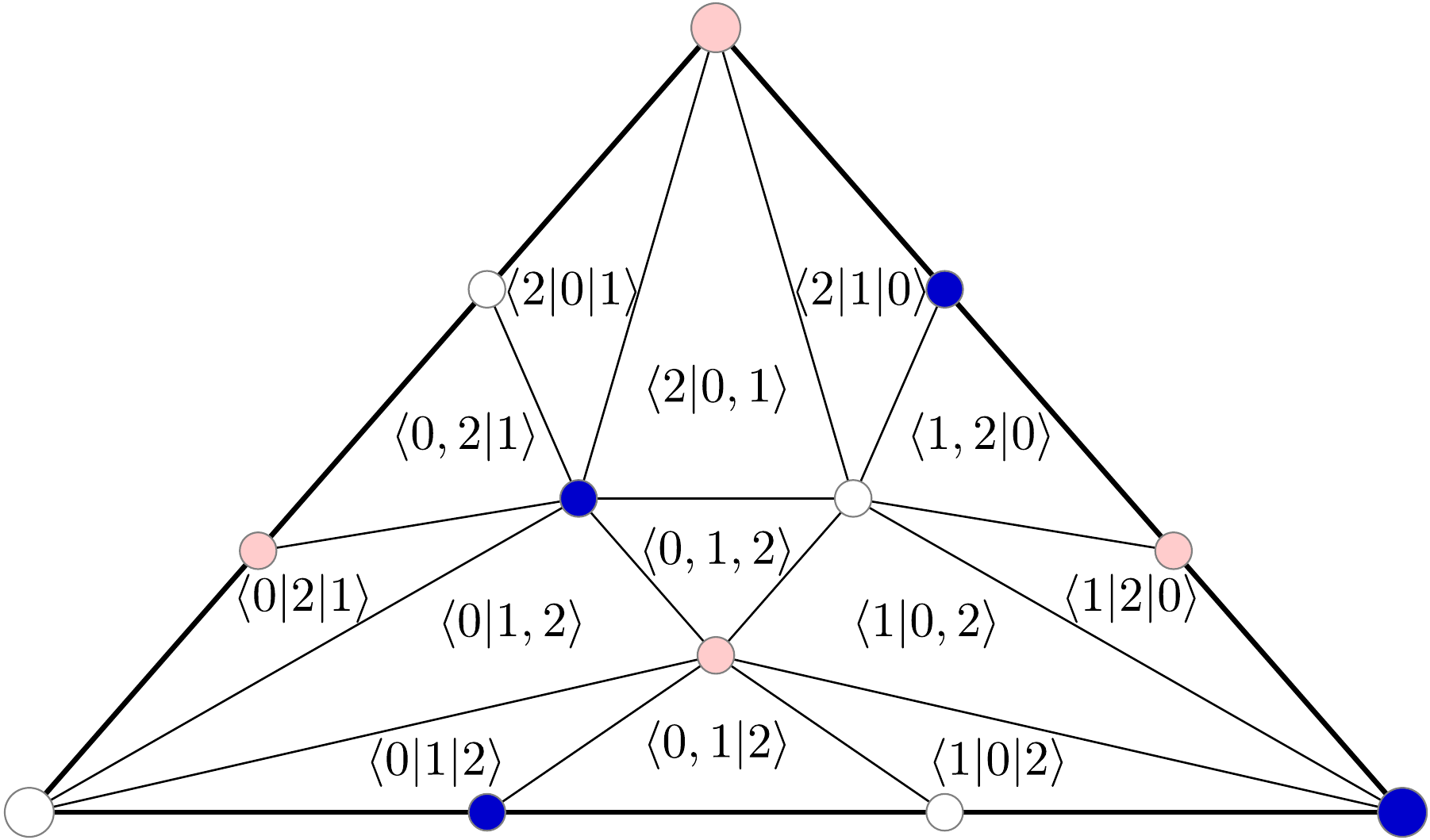}
    \caption{Standard chromatic subdivision of a 2-dimensional simplex
    with facets being labeled with ordered set partitions.}
    \label{fig:IS1osp}
\end{figure}

\subsection{Product update model \texorpdfstring{$\IIS{m}$}{I[ISm]}
    for multi-round \texorpdfstring{$m$}{m}-iterated immediate snapshot protocol}
\label{subsec:PUmodelIIS}

In this paper, we 
assume the read-write shared memory model as a device for inter-process communication. 
The topological properties of the 
read-write shared memory distributed systems are well understood.
%
As for wait-free solvability, 
the read-write shared memory model
is equivalent to the iterated immediate snapshot protocol 
\cite{BorowskyGafni:PODC93,GuerraouiKuznetsov:AlgConcurrentSys,Nishimura:OPODIS17},
a multi-round protocol in which processes communicate via
immediate snapshot objects. 

In each single round execution of the immediate snapshot protocol,
the $n+1$ processes are arranged into an arbitrary sequence of 
concurrency classes $A_1, \ldots, A_r$ ($r\geq 1$), 
a disjoint partition of $\colorSet$ by nonempty sets. 
According to the arranged order, 
processes take snapshot: Each process writes its own input value in a shared memory 
and collects the values that have been written so far by other processes
and itself. The set of values collected by a snapshot operation is called a \keywd{view}. 
Processes belonging to the same concurrency class $A_q$
witness the same view, i.e., the set of values 
written by the processes of ids $\bigcup_{i=1}^{q}A_i$.
In the $m$-iterated multi-round immediate snapshot protocol, 
each process repeatedly takes snapshots, 
using the view of $i$-th round as the input to the 
subsequent $(i+1)$-th round.


%
The virtue of the immediate snapshot model is 
that each snapshot operation amounts to a topological operation of subdivision,
called the \keywd{(iterated) standard chromatic subdivision} \cite{HerlihyShavit99,Herlihy:DCTopology}.
Figure~\ref{fig:IS1osp} illustrates the standard chromatic subdivision  
of a 2-dimensional simplex,  where 
each facet of the subdivision is distinctly labeled
by a sequence of concurrency classes, or an \keywd{ordered set partition}. 
We write $\osp{A_1{\mid} A_2 {\mid}\cdots{\mid} A_r}$ to denote an 
ordered set partition, where $A_1, A_2, \ldots, A_r$ ($r>0$) is 
a sequence of nonempty subsets such that $\colorSet=\bigcup_{i=1}^r A_i$ \cite{Kozlov12}. 
(In what follows, we omit curly braces in an ordered set partition, say, 
$\osp{0,1{\mid} 2 {\mid} 3}$ instead of 
$\osp{\{0,1\}{\mid} \{2\} {\mid} \{3\}}$.)

Let $\gamma, \gamma'$, etc.\ range over ordered set partitions.
For a facet $X\in \Facet(\cplI)$, 
we write $X\fX \gamma$ to denote
a facet of the standard chromatic subdivision of $X$
that is uniquely identified by an ordered set partition $\gamma$.  
The snapshot view of a process~$a$ 
in a facet $X\fX\osp{A_1{\mid} A_2 {\mid}\cdots{\mid} A_r}$ of the subdivision is defined by: 
\begin{align*}
    \viewOf{a}(X\fX\osp{A_1{\mid} A_2 {\mid}\cdots{\mid} A_r}) = {} &
    \{(b,v)\in X \mid b\in \textstyle\bigcup_{i=1}^q A_i\}, 
    \text{ where $a\in A_q$.}
\end{align*}
In other words, a facet 
$X\fX\osp{A_1{\mid} A_2 {\mid}\cdots{\mid} A_r}$ 
denotes a set of vertexes $\bigl\{ \bigl(a,\viewOf{a}(X\fX\langle A_1{\mid} A_2 {\mid}\cdots$ ${\mid} A_r\rangle)\bigr)
\mid a\in \colorSet\bigr\}$.
We also write $X\fX \gamma_1 \fX \cdots \fX \gamma_m$ ($m\geq 1$) to denote
a facet of subdivision by an $m$-iterated immediate snapshot, 
where each ordered set partition $\gamma_i$ corresponds to the $i$-th snapshot,
with $\fX$ associating to left.
The snapshot view is defined by induction on $m$. 

The product update model 
of the $m$-iterated immediate snapshot protocol is  
given by 
$\IIS{m} = \langle\Facet(\IIS{m}),$ $ \sim^{\IIS{m}}, L'\rangle$,
where the set of facets is $\Facet(\IIS{m}) = 
\{ X \fX \gamma_1 \fX \cdots \fX \gamma_m \mid 
X\in\Facet(\cplI), 
\gamma_1,\ldots,\gamma_m$ are ordered set partitions of $\colorSet\}$\footnote{%
    Formally, a facet 
    in $\IIS{m}$ must be written
    in the form  $X \times (X \fX \gamma_1 \fX \cdots \fX \gamma_m)$, 
    with duplicates of $X$, but for brevity
    we write it in the abridged notation $X \fX \gamma_1 \fX \cdots \fX \gamma_m$.
    An alternative definition of product update proposed in 
    \cite{VelazquezCastanedaRosenblueth21} can circumvent this kind of duplication, 
    but we stick to the original definition.}
and
$L'(X \times \gamma_1 \fX \cdots \fX \gamma_m) = \{\Pinput{a}{v} \mid (a,v)\in X\}$
for each  $X \times \gamma_1 \fX \cdots \fX \gamma_m \in \Facet(\IIS{m})$.

In \cite{Kozlov12}, using ordered set partition, Kozlov 
gave the combinatorial description of incidence of facets contained in
the standard chromatic subdivision of a simplex.
This entails that the facets in $\IIS{}$ 
admit the following incidence relation, which 
is suitable for the subsequent discussion. 



%
Suppose $0\leq d\leq n$ and $b\in[0,d]$.  
Let $A =[0,d]\setminus \{ b\}$  and $X \fX \gamma\in \Facet(\IIS{})$ be a facet,
where $X\in\Facet(\cplI)$ and 
$\gamma =\osp{A_1 {\mid} A_2{\mid} \cdots{\mid} A_r 
{\mid} d+1 {\mid} d+2  {\mid}\cdots{\mid} n}$ ($r>0$) 
is an ordered partition of $[0,n]$. Then, 
$X\fX \gamma \relD{}{A} Y \fX \osp{A_1' {\mid} A_2'{\mid} \cdots{\mid} A_{r'}' 
{\mid} d+1  {\mid}\cdots{\mid} n}$ holds if and only if either of the following holds. 
\begin{itemize}
    \item $A_r=\{b\}$, 
    $\osp{A_1' {\mid} A_2'{\mid} \cdots{\mid} A_{r'}' 
    {\mid} d+1  {\mid}\cdots{\mid} n}=\gamma$,  
    and $X \relD{}{A} Y$.

    \item $A_r\neq\{b\}$,
    $X=Y$, and $\osp{A_1' {\mid} A_2'{\mid} \cdots{\mid} A_{r'}' 
    {\mid} d+1  {\mid}\cdots{\mid} n}$ is either $\gamma$ or $\flip{A}{\gamma}$,
    where 
    \begin{align*}
        \flip{A}{\gamma} = {}&
        \begin{cases}
            \osp{A_1 {\mid} 
            \cdots{\mid} A_{r-1}{\mid} A_{r}{\setminus}\{b\}       
            {\mid} \,b\, {\mid} 
            d+1  {\mid}\cdots{\mid} n}
            & \text{if $b\in A_r$, $\abs{A_r}>1$,}
            \\
            \osp{A_1 {\mid} 
            \cdots{\mid} A_{s-1}{\mid} A_{s}{\setminus}\{b\}{\mid} \,b\, {\mid}A_{s+1}{\mid}      
            \cdots{\mid} A_r{\mid} 
            d+1  {\mid}\cdots{\mid} n}
            & \text{if $b\in A_s$, $\abs{A_s}>1$  ($s< r$),}
            \\
            \osp{A_1 {\mid} 
            \cdots{\mid} A_{s-1}
            {\mid} A_{s+1}{\cup}\{b\}{\mid}         
            \cdots{\mid} A_r{\mid} 
            d+1  {\mid}\cdots{\mid} n}
            & \text{if $A_s=\{ b\}$ ($s< r$).}
        \end{cases}
    \end{align*}
\end{itemize}

This extends to the multi-round model $\IIS{m}$ as follows.
\begin{proposition} \label{prop:opsincident}
    Suppose $0\leq d\leq n$ and $b\in[0,d]$.  
    Let $A =[0,d]\setminus \{ b\}$  and 
    $X\fX\gamma_1\fX \cdots\fX \gamma_m \in \Facet(\IIS{m})$, where 
    each $\gamma_i$ is an ordered set partition of the restricted form
    $\osp{A_1 {\mid} A_2{\mid} \cdots{\mid} A_{r_i} {\mid} d+1  {\mid}\cdots{\mid} n}$.
    Then, 
    $X\fX\gamma_1\fX \cdots\fX \gamma_m \relD{}{A}
    Y\fX\gamma_1'\fX \cdots\fX \gamma_m'$, 
    where each $\gamma_i'$ is an ordered set partition of the same restricted form,  
    holds if and only if either of the following hold.
    \begin{itemize}
        \item $A_{r_i}=\{b\}$ and $\gamma_i'=\gamma_i$ holds for every $i$ and also
        $X \relD{}{A} Y$.

        \item $A_{r_i}\neq\{b\}$ for some $i$, $X=Y$, and 
        $\gamma_i'=\gamma_i$ holds for every $i$ except for 
        the minimum index $j$ such that $A_{r_j}\neq \{b\}$
        and $\gamma_j'$ is either $\gamma_j$ or $\flip{A}{\gamma_j}$.    
    \end{itemize}
\end{proposition}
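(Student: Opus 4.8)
The plan is to argue by induction on the number of rounds $m$, taking the single-round incidence relation stated immediately above (Kozlov's combinatorial description of one standard chromatic subdivision) both as the base case $m=1$ and as the engine of the inductive step. For the step I would read an $m$-round facet as a single subdivision applied to an $(m-1)$-round facet, writing $X\fX\gamma_1\fX\cdots\fX\gamma_m = Z\fX\gamma_m$ with $Z = X\fX\gamma_1\fX\cdots\fX\gamma_{m-1}$, and likewise $Z'\fX\gamma_m'$ on the other side. Since the single-round relation really describes the adjacency of facets obtained by subdividing an \emph{arbitrary} colored simplex (not only an input facet of $\cplI$), I may apply it with $Z,Z'$ in the role of the base simplices: it tells me that $Z\fX\gamma_m\relD{}{A}Z'\fX\gamma_m'$ is governed entirely by $\gamma_m,\gamma_m'$ together with the adjacency of $Z$ and $Z'$, and the induction hypothesis then resolves the latter for the first $m-1$ rounds.

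The heart of the step is the case split on whether the last active block of $\gamma_m$ equals $\{b\}$, which mirrors the two bullets of the single-round statement. If $A_{r_m}=\{b\}$, then under the restricted form no process of $A=[0,d]\setminus\{b\}$ has $b$ in its round-$m$ snapshot view (the processes of $A$ all lie in strictly earlier blocks), so agreement of the $A$-views in round $m$ reduces exactly to agreement of the $A$-views of $Z$ and $Z'$; hence $\gamma_m'=\gamma_m$ is forced and I recurse on $Z\relD{}{A}Z'$ with the \emph{same} knowledge set $A$. If $A_{r_m}\neq\{b\}$, then some process of $A$ shares the last block with $b$ (or $b$ sits in an earlier block), so that process observes $b$ in round $m$; consequently agreement of the $A$-views now demands agreement of \emph{all} round-$(m-1)$ views indexed by $[0,d]$. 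Here I would invoke a \emph{knowledge-set expansion} lemma: on facets whose partitions are all of the restricted form, the views of the trailing processes $d+1,\dots,n$ are deterministic functions of the views of $[0,d]$, so $Z\relD{}{[0,d]}Z'$ collapses to $Z=Z'$, i.e. $X=Y$ and $\gamma_i'=\gamma_i$ for all $i<m$; the only remaining freedom at round $m$ is then $\gamma_m'\in\{\gamma_m,\flip{A}{\gamma_m}\}$, since $\flip{A}{\cdot}$ is precisely the modification that keeps every $A$-view fixed while relocating $b$.

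Unfolding this recursion pins every round down: so long as the peeled round has last block $\{b\}$ the partition is copied verbatim and the knowledge set remains $A$, whereas the single round whose last block differs from $\{b\}$—the distinguished index $j$ singled out in the statement—both admits the flip $\flip{A}{\gamma_j}$ and forces the other rounds together with the base facet to coincide. This yields exactly the two alternatives of the proposition, with the flip confined to index $j$ and with the genuine adjacency $X\relD{}{A}Y$ surviving only in the all-$\{b\}$ case. I expect the main obstacle to be the expansion lemma and the snapshot-view bookkeeping it rests on: one must check that $A_{r_i}\neq\{b\}$ really does force some process of $A$ to observe $b$ (so the knowledge set genuinely grows to $[0,d]$), that on restricted-form facets $\relD{}{[0,d]}$ degenerates to equality through the determinism of the trailing coordinates, and that $\flip{A}{\gamma_j}$ exhausts the $A$-preserving alternatives to $\gamma_j$—each a careful but routine manipulation of the view formula for $X\fX\osp{A_1\mid\cdots\mid A_r\mid d+1\mid\cdots\mid n}$.
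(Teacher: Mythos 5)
Your overall plan---induction on $m$, peeling off the last round and treating the one-round incidence description as the engine of the step---is the natural one, and there is nothing in the paper to compare it against in more detail: Proposition~\ref{prop:opsincident} is stated without proof, as an unargued extension of the one-round description. Your handling of the case $A_{r_m}=\{b\}$ is sound. The case $A_{r_m}\neq\{b\}$, however, contains two genuine gaps. The first is the ``knowledge-set expansion lemma,'' which is false: the view of a trailing process $c\in[d+1,n]$ contains that process's \emph{own} input, which is not a function of the $[0,d]$-views. Concretely, for $n=2$, $d=1$, $b=1$, take $X=\{(0,u_0),(1,u_1),(2,u_2)\}$ and $Y=\{(0,u_0),(1,u_1),(2,u_2')\}$ with $u_2\neq u_2'$. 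Then $X\relD{}{[0,1]}Y$ but $X\neq Y$, and the restricted-form facets $X\fX\osp{1|0|2}$ and $Y\fX\osp{1|0|2}$ share the color-$0$ vertex $(0,\{(0,u_0),(1,u_1)\})$, hence are $\relD{}{\{0\}}$-related. So your Case 2 recursion genuinely stops at $X\relD{}{[0,d]}Y$ and cannot be pushed to $X=Y$; worse, the same example shows that the conclusion $X=Y$ in the second bullet of the proposition is itself false whenever $d<n$, so no repair of your lemma is possible.

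The second gap is the location of the flip. Peeling from round $m$ downwards, your recursion admits the flip at the first round it meets whose last active block differs from $\{b\}$---that is, at the \emph{largest} such index---and the expansion of the knowledge set to $[0,d]$ then freezes all \emph{earlier} rounds. You then identify that round with ``the distinguished index $j$ singled out in the statement,'' but the statement's $j$ is the \emph{minimum} such index; the two differ as soon as two rounds have last active block $\neq\{b\}$, and you offer no argument bridging them. None exists: with two processes, $b=1$, $A=\{0\}$, $X=\{(0,u_0),(1,u_1)\}$, and $\gamma_1=\gamma_2=\osp{1|0}$ (so $\flip{A}{\gamma_i}=\osp{0,1}$), process $0$'s final view in $X\fX\osp{1|0}\fX\osp{1|0}$ is $\{(0,X),(1,\{(1,u_1)\})\}$; this equals its final view in $X\fX\osp{1|0}\fX\osp{0,1}$ (flip at round $2$) but differs from its final view $\{(0,X),(1,X)\}$ in $X\fX\osp{0,1}\fX\osp{1|0}$ (flip at round $1$, the facet the proposition designates). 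So an honest execution of your own induction establishes the variant of the proposition in which ``minimum index'' is replaced by ``maximum index'' and ``$X=Y$'' by ``$X\relD{}{[0,d]}Y$''; it does not, and cannot, prove the statement as printed. The two places where your write-up glosses over the difference---the expansion lemma and the silent min/max identification---are exactly the places where the printed statement appears to go wrong, so before repairing the proof you should first repair the statement (and check how the corrected version is used in Proposition~\ref{prop:bidegree}, where the extra $\relD{}{[0,d]}$-related facets over different input bases must be excluded by the side conditions of $\Rconn{A}$ rather than by Proposition~\ref{prop:opsincident} itself).
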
    



\subsection{Logical properties of product update models 
    \texorpdfstring{$\PUfc{\ISA{k}}$}{I[SAk]} and \texorpdfstring{$\IIS{m}$}{I[ISm]}}
\label{subsec:logicalProp}

In the sequel, let $\sigma, \tau$, etc.\ range over facets of a product update model. 

In both $\PUfc{\ISA{k}}$ and $\IIS{m}$, 
the following formula $\IFUN$ is valid. 
\begin{align} \label{eq:IFUN}
    \IFUN = {} & \textstyle
    \bigwedge_{a\in\colorSet}
    \bigl(\bigwedge_{i,j\in\colorSet,i\neq j} 
    \neg(\Pinput{a}{i}\wedge\Pinput{a}{j})
    \wedge
    \bigvee_{i\in\colorSet}\Pinput{a}{i}
    \bigr)
\end{align}
This formula is intended to mean that each process~$a$ is given a unique input value.

In $\PUfc{\ISA{k}}$, the following formulas concerning output values are also valid.
\begin{align}
    \OFUN = {} &  \textstyle
    \bigwedge_{a\in\colorSet}
    \bigl(
        \bigwedge_{d,e\in\colorSet,d\neq e} \neg(\Pdecide{a}{d}\wedge\Pdecide{a}{e})
        \wedge
        \bigvee_{d\in\colorSet}\Pdecide{a}{d}
    \bigr)
    \label{eq:OFUN}
    \\
    \VALID = {} & \textstyle
    \bigwedge_{a\in\colorSet}
    \bigwedge_{d\in\colorSet}
    \bigl(
        \Pdecide{a}{d} \Rightarrow \bigvee_{b\in\colorSet} \Pinput{b}{d}
    \bigr)
    \label{eq:VALID}
    \\
    \AGREE{k} = {} & \textstyle
    \bigvee_{A \subseteq \colorSet, 0<\abs{A}\leq k}
    \bigwedge_{a\in\colorSet} \bigvee_{d\in A} \Pdecide{a}{d}
    \label{eq:AGREE}
    \\
    \KNOW = {} & \textstyle
    \bigwedge_{A \subseteq \colorSet}
    \bigwedge_{a\in A} 
    \bigwedge_{d\in \colorSet} 
    \bigl(\Pdecide{a}{d} \Rightarrow \ModD{A} \Pdecide{a}{d}\bigr)
    \label{eq:KNOW}
\end{align}
The formula $\OFUN$ 
means that each process~$a$ 
decides a unique output value.
%
The formula $\VALID$ expresses the validity condition:
Any output value $d$ must have been input to some of the processes. 
The formula $\AGREE{k}$ specifies that output values 
must be decided among 
at most $k$ different values. 
Finally, the formula $\KNOW$ indicates that, given two facets 
$\sigma$ and $\sigma'$ such that $\sigma \relD{}{A} \sigma'$, 
any process $a\in A$ decides the same output value $d$ at both 
$\sigma$ and $\sigma'$. 


\section{Proving Unsolvability of \texorpdfstring{$k$-Set}{k-Set} Agreement by Logical Method}
\label{sec:IISunsolvable}



Let us revisit the the original proof of Sperner's lemma \cite{Longueville13}, 
which has been applied to show the unsolvability of $k$-set agreement task.
Suppose we have a fully-colored $d+1$-dimensional simplex $X$ 
such that $\coloring(X)=[0,d]$ and also a subdivision of $X$ with Sperner coloring. 
The proof proceeds by induction on dimension.
Assume that, for the fully-colored $d$-dimensional face $Y$ of $X$
such that $\coloring(Y)=[0,d-1]$, 
the subdivision of $Y$ contains an odd number of fully-colored 
$d$-dimensional simplexes. 
The subdivision of $X$ is modeled by a graph, where
the set of graph nodes comprises of the $d+1$-dimensional simplexes of 
the subdivision of $X$
and a special extra node. The extra node is connected by a graph edge
to each $d+1$-dimensional simplex of the subdivision of $X$
that has an $d$-dimensional simplex of the subdivision of $Y$
as a face. Further, a pair of $d+1$-dimensional simplexes are connected by an edge
if and only if they share a common $d$-dimensional face whose
vertexes are distinctly colored by $[0,d-1]$. 
In this graph, a graph node has an odd degree,
if and only if either it is the special extra node or a fully-colored $d+1$-dimensional simplex. 
Therefore, by a simple fact from graph theory that every graph has an even number of nodes of odd degree, 
we conclude that the subdivision of $X$ contains an odd number of fully-colored $d+1$-dimensional simplexes.

In the proceeding proof, instead of counting the number of relevant simplexes in a graph, 
we exploit graph connectivity by using fixpoints and
distributed knowledge modality of epistemic $\mu$-calculus. 
Specifically, assuming the solvability of $k$-set agreement, 
we provide a concrete epistemic $\mu$-calculus formula
that entails the existence of 
a cycle-free, ever-lasting path, which contradicts to the finiteness of the graph. 


\subsection{Proving unsolvability with
    an epistemic \texorpdfstring{$\mu$-calculus}{mu-calculus} formula}
\label{subsec:logicalObst}

To show the unsolvability of the $k$-set agreement task, 
let us assume, by contradiction, that there exists a morphism~$\delta: \IIS{m} \to \ISA{k}$.

For any nonempty subset $A$ of $\colorSet$,  let us define 
$\DEC{A} = 
\bigwedge_{d=0}^{\abs{A}-1}\bigvee_{a\in A}\Pdecide{a}{d}$.
It is easy to see that 
the following epistemic $\mu$-calculus formula
\begin{align} \label{eq:logicalobstruction}
    \Phi_k = {} &
    \nu Z.
    \Bigl[
        \OFUN \wedge \VALID \wedge  
    \bigwedge_{\emptyset\subsetneq A\subseteq \colorSet}
    \bigl(
        \DEC{A} \Rightarrow \ModD{A} (\KNOW \wedge \AGREE{k} \wedge Z)
    \bigr)\Bigr]
\end{align}%
is valid in 
the product update model with factual change $\PUfc{\ISA{k}}$, because
$\OFUN$, $\VALID$, $\KNOW$, $\AGREE{k}$ are all valid formulas in $\PUfc{\ISA{k}}$. 
Then Theorem~\ref{th:PUlogical} implies that $\Phi_k$ is also valid 
in $\fchange{\IIS{m}}{\delta}$. 
In what follows, 
we will show that this cannot happen, that is, 
$\fchange{\IIS{m}}{\delta}, \sigma \not\models \Phi_k$ for some $\sigma\in\Facet(\IIS{m})$. 

\begin{figure}[t] \centering
    \includegraphics[width=1.0\textwidth]{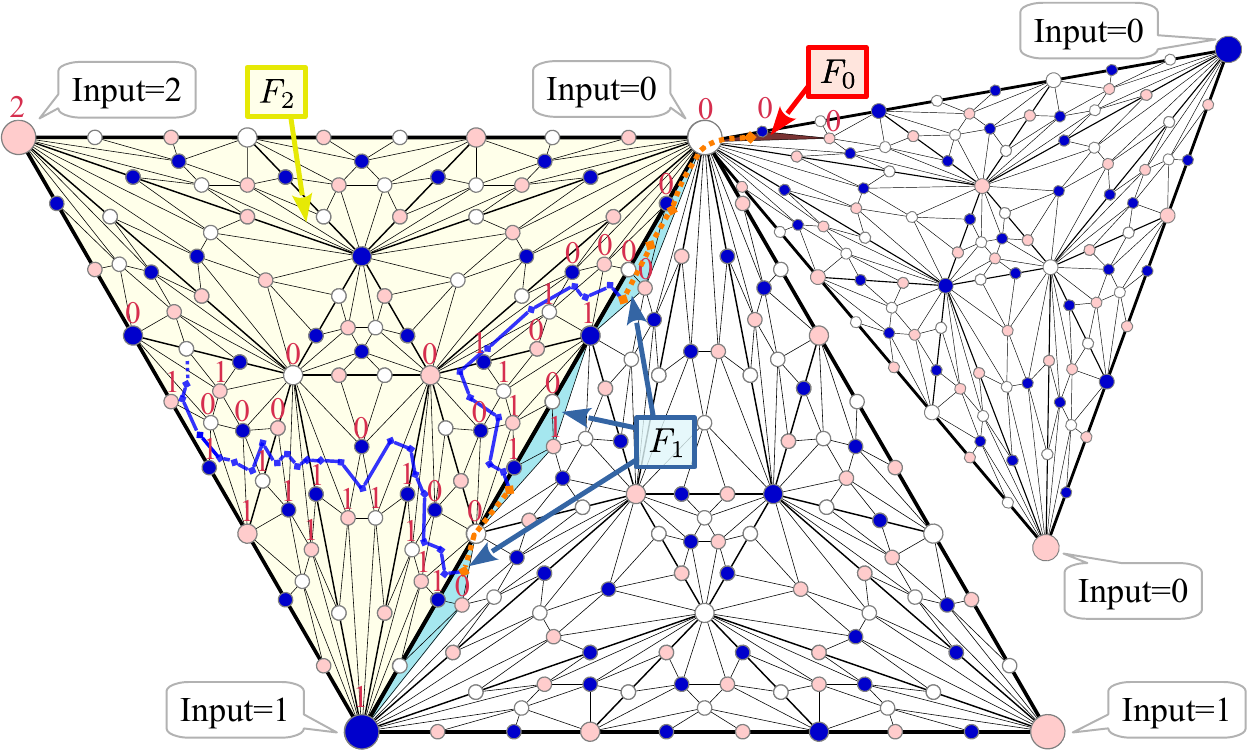}
    \caption{The collections of facets $F_0$, $F_1$, $F_2$ in  
        the product update model $\IISfc{2}{\delta}$. 
        The dotted orange lines and solid blue lines 
        indicate the (incomplete) path of facets connected by the relation 
        $\Rconn{\{0\}}$ and $\Rconn{\{0,1\}}$, respectively.
        The number put on top of each vertex is the output value.}
        \label{fig:IIS2fc}
\end{figure}


Let 
$I_0, \ldots, I_n \in \Facet(\cplI)$ be facets of the input complex $\cplI$ defined by
\begin{align*}
    I_d ={} & \{(i,i) \mid i \in [0,d] \} \cup \{(i,d) \mid i\in [d+1, n] \} 
\end{align*}
and also $F_0, \ldots, F_{n} \subseteq \Facet(\IISfc{m}{\delta})$ 
be collections of the facets in $\IISfc{m}{\delta}$ defined by
\begin{align*}
    \quad F_d ={} & \{
        I_d \fX \gamma_1 \fX \cdots \fX \gamma_m \mid
    \parbox[t]{.75\textwidth}{%
        $\gamma_i = \osp{A_{i,1} {\mid} \cdots {\mid} A_{i,r_i} {\mid} d+1 {\mid} \cdots {\mid} n}$ ($r_i\geq 1$)
        is an ordered set partition of $[0,n]\}$.}    
\end{align*}

Each collection $F_d$ is intended to 
represent the graph nodes in the proof of Sperner's lemma.
Each facet in $F_d$ 
has a $d$-dimensional face that is a simplex
of the subdivision of the $d$-dimensional face $I_d'=\{ (i,i) \mid i \in [0,d]\}$ of $I_d$. 
Conversely, each $d$-dimensional simplex of the subdivision of $I_d'$
is a face of a unique facet $I_d \fX \gamma_1 \fX \cdots \fX \gamma_m\in F_d$,
where the uniqueness is due to
the complementary vertexes $\{(i,d)\mid i\in [d+1,n]\}$ in $I_d$ and 
the trailing sequence ${\mid} d+1 {\mid} \cdots {\mid} n\rangle$ 
in each ordered set partition $\gamma_j$.

We define an irreflexive symmetric relation 
$\Rconn{A}$ over $\bigcup_{d=0}^k F_d$, which corresponds to  
the graph edges defined in the proof of Sperner's lemma: 
For every $\sigma,\sigma'\in \bigcup_{d=0}^k F_d$,  
$\sigma \Rconn{A} \sigma'$ holds iff 
\begin{itemize}
    \item $\sigma\relD{}{A}\sigma'$ and $\sigma\neq \sigma'$,
    \item $A\subseteq [\,0,\abs{A}\,]$, 
    \item $\sigma \in F_d$, $\sigma' \in F_e$ for some $d, e$ such that
        $\max(d,e)=\abs{A}$ and $\abs{d-e}\leq 1$, and 
    \item $\IISfc{m}{\delta}, \sigma\models \DEC{A}$ 
        and $\IISfc{m}{\delta}, \sigma'\models \DEC{A}$.
\end{itemize}

Figure~\ref{fig:IIS2fc} depicts, for the case of three processes, the collections of facets  
$F_0$, $F_1$, $F_2$ in $\IIS{2}$ and how the facets are related by $\Rconn{A}$. 
Note that, unlike the proof of Sperner's lemma where a graph is constructed
for each different dimension,
we define the relation $\Rconn{A}$ for all dimensions at once. 
Instead of putting a special graph node for each dimension, we allow 
a pair of facets $\sigma\in F_d$ and $\sigma\in F_{d+1}'$ to be related,
if they share a $d$-dimensional common face whose vertexes are colored by $[0,d-1]$. 

Particularly when $d=1$, we see that 
$F_0$ consists of a single facet $\sigma_0$ that
is related solely with a single facet in~$F_1$.
\begin{lemma} \label{lemma:Fzero}
    Let $\gamma_1=\cdots=\gamma_m = \osp{0{\mid}1{\mid}\cdots{\mid}n}$. 
    We have 
    $F_0 = \{ \sigma_0 \}$, where $\sigma_0 = I_0\fX \gamma_1 \fX\cdots\fX \gamma_m$.
    Also, $\sigma_0 \Rconn{A} \sigma'$ iff 
    $A=\{0\}$ and $\sigma' = I_1 \fX\gamma_1 \fX\cdots\fX \gamma_m$.
\end{lemma}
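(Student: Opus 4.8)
The plan is to prove the two assertions of Lemma~\ref{lemma:Fzero} separately, in both cases reducing everything to the incidence description of Proposition~\ref{prop:opsincident}.

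First I would settle $F_0=\{\sigma_0\}$ directly from the defining shape of the facets in $F_0$. Each defining ordered set partition has the form $\osp{A_{i,1}\mid\cdots\mid A_{i,r_i}\mid 1\mid 2\mid\cdots\mid n}$ and must be an ordered set partition of $[0,n]$; since the trailing singletons already exhaust $[1,n]$, the leading blocks $A_{i,1},\ldots,A_{i,r_i}$ can only partition the remaining set $\{0\}$, forcing $r_i=1$ and $A_{i,1}=\{0\}$. Hence every $\gamma_i$ equals $\osp{0\mid 1\mid\cdots\mid n}$ and $F_0$ has the single element $\sigma_0=I_0\fX\gamma_1\fX\cdots\fX\gamma_m$.

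Next I would characterize the $\Rconn{A}$-edges leaving $\sigma_0$. Since $\sigma_0\in F_0$ has $d=0$, the dimension clause of $\Rconn{A}$ demands $\sigma'\in F_e$ with $\max(0,e)=\abs{A}$ and $\abs{0-e}\le 1$, so $e\in\{0,1\}$ and $\abs{A}=e$. The case $e=0$ gives $A=\emptyset$ and $\sigma'\in F_0=\{\sigma_0\}$, contradicting $\sigma_0\neq\sigma'$; so $e=1$, $\abs{A}=1$, and the clause $A\subseteq[0,\abs{A}]=\{0,1\}$ leaves exactly $A=\{0\}$ or $A=\{1\}$. To decide between these I would apply Proposition~\ref{prop:opsincident} with $d=1$, after re-reading each $\gamma_i=\osp{0\mid 1\mid 2\mid\cdots\mid n}$ in the $d=1$ restricted form (leading part $\osp{0\mid 1}$, so the last leading block is $A_{r_i}=\{1\}$); the candidate $\sigma'\in F_1$ is already in this form with trailing $\mid 2\mid\cdots\mid n$. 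For $A=\{1\}$ (hence $b=0$) we have $A_{r_i}=\{1\}\neq\{b\}$ for every $i$, so only the second case of Proposition~\ref{prop:opsincident} can apply, which forces $I_0=Y=I_1$; but $I_0\neq I_1$ (e.g.\ $(1,0)\in I_0\setminus I_1$), so no such $\sigma'$ exists. For $A=\{0\}$ (hence $b=1$) we have $A_{r_i}=\{1\}=\{b\}$ for every $i$, so the first case applies, forcing $\gamma_i'=\gamma_i$ for all $i$ together with $I_0\relD{}{\{0\}}Y$; since $\sigma'\in F_1$ fixes $Y=I_1$ and $(0,0)\in I_0\cap I_1$ gives $I_0\sim_0 I_1$, the unique candidate is $\sigma'=I_1\fX\gamma_1\fX\cdots\fX\gamma_m$.

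For the ``if'' direction I must additionally verify the $\DEC{\{0\}}$ clause at both facets, noting $\DEC{\{0\}}$ is just $\Pdecide{0}{0}$. At $\sigma_0$ the only input value present is $0$ (every vertex of $I_0$ carries value $0$), so $\VALID$ together with $\OFUN$ forces process~$0$ to decide $0$, giving $\sigma_0\models\Pdecide{0}{0}$; then $\sigma_0\sim_0\sigma'$ and the valid formula $\KNOW$ (the instance $\Pdecide{0}{0}\Rightarrow\ModD{\{0\}}\Pdecide{0}{0}$) propagate this to $\sigma'\models\Pdecide{0}{0}$, where the validity of $\VALID,\OFUN,\KNOW$ in $\IISfc{m}{\delta}$ comes from Theorem~\ref{th:PUlogical}. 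The remaining clauses ($A\subseteq\{0,1\}$, the dimension condition, and $\sigma_0\neq\sigma'$) are immediate. The step I expect to be the main obstacle is the bookkeeping in applying Proposition~\ref{prop:opsincident}: one must recast $\sigma_0$, which naturally lives in the $d=0$ form of $F_0$, into the $d=1$ restricted form and correctly read off $A_{r_i}=\{1\}$ — this is precisely what selects $A=\{0\}$ (matching $b=1$) over $A=\{1\}$ and rules out every cross-dimension edge except the single one into $F_1$, which is the base case that starts the Sperner-style path.
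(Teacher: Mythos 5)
Your proof is correct: the paper actually states Lemma~\ref{lemma:Fzero} without proof, and your argument supplies one using exactly the machinery the paper relies on in the adjacent results --- Proposition~\ref{prop:opsincident} for the incidence analysis (including the necessary re-reading of $\gamma_i$ in the $d=1$ restricted form with last leading block $\{1\}$, which rules out $A=\{1\}$ and pins down the unique neighbor in $F_1$), and Theorem~\ref{th:PUlogical} together with $\OFUN$, $\VALID$, $\KNOW$ for the $\DEC{\{0\}}$ clauses, matching the reasoning in the proofs of Proposition~\ref{prop:bidegree} and Theorem~\ref{th:IISm}. I see no gaps.
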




More generally, up to certain condition that is compelled by the formula $\Phi_k$,
each facet in $F_0,F_1,\ldots,F_k$ other than $\sigma_0$ is 
related with zero or two facets by $\Rconn{A}$.
\begin{proposition} \label{prop:bidegree}    
    Suppose $\IISfc{m}{\delta}\models \VALID$. 
    For any $\sigma \in F_d$ ($1\leq d\leq k$) such that
    $\IISfc{m}{\delta},\sigma \models \OFUN\wedge\VALID\wedge\AGREE{k}\wedge\KNOW$, 
    the number of facets $\sigma' \in \bigcup_{d=0}^k F_d$ satisfying $\sigma \Rconn{A} \sigma'$
    for some $A$ is either 0 or~2. 
\end{proposition}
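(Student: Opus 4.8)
The plan is to reduce the relation $\Rconn{}$ around a fixed facet $\sigma = I_d\fX\gamma_1\fX\cdots\fX\gamma_m\in F_d$ to the classical door count of Sperner's lemma, with the decision values playing the role of the Sperner colouring. First I would pin down which sets $A$ can occur: the conditions $A\subseteq[0,\abs{A}]$, $\max(d,e)=\abs{A}$ and $\abs{d-e}\le 1$ force $A=[0,s]\setminus\{b\}$ with $s=\abs{A}\in\{d,d+1\}$, and comparing the input facets $I_d$ and $I_e$ at the vertex where they differ shows that a neighbour in $F_{d-1}$ can only use $b=d$, $A=[0,d-1]$, and one in $F_{d+1}$ only $b=d+1$, $A=[0,d]$ (otherwise the common vertex of colour $d$, resp.\ $d+1$, would carry mismatched inputs). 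In parallel I would read $\DEC{A}$ as a colouring condition: $\OFUN$ gives each process a single output, $\VALID$ confines outputs to $[0,d]$, and for $A=[0,d]\setminus\{b\}$ the formula $\DEC{A}$ asserts that the $d$ processes of $A$ decide all of $\{0,\dots,d-1\}$, i.e.\ that the face of colour $A$ is a \emph{door} carrying exactly the outputs $\{0,\dots,d-1\}$, whereas $\DEC{[0,d]}$ asserts that $\sigma$ is fully coloured by $\{0,\dots,d\}$.

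The key preliminary is to promote the global formula $\VALID$ to a genuine Sperner condition on outputs. Although $\VALID$ by itself only says that a decision is \emph{some} global input, I would apply $\KNOW$ at $\sigma$ with $A=\{a\}$: it forces process $a$'s decision to agree at every facet $\relD{}{\{a\}}$-related to $\sigma$ (including relatives lying outside $\bigcup_{e}F_e$), and since $\IISfc{m}{\delta}\models\VALID$ holds at all of those facets, $a$'s output must be a global input there too. Intersecting over all such relatives pins $a$'s output to the values admissible for $a$'s view, which for a facet of $F_d$ is exactly the set of colours of the minimal face of $I_d'=\{(i,i)\mid i\in[0,d]\}$ carrying $a$'s sub-vertex. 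Granting this Sperner condition, the usual dichotomy for the door count follows: the number of indices $b\in[0,d]$ with $\DEC{[0,d]\setminus\{b\}}$ true is $1$ when $\sigma$ is fully coloured, is $2$ when the outputs cover $\{0,\dots,d-1\}$ but omit $d$ (exactly one value doubled), and is $0$ otherwise.

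Next I would turn each active door into exactly one $\Rconn{}$-neighbour through Proposition~\ref{prop:opsincident}. For a door $A=[0,d]\setminus\{b\}$ it offers the flip case (some $A_{r_i}\neq\{b\}$), giving the unique same-level partner obtained by replacing $\gamma_j$ with $\flip{A}{\gamma_j}$ at the relevant index, and the input-change case (all $A_{r_i}=\{b\}$), whose only partner inside $\bigcup_{e=0}^{k}F_e$ is $I_{d-1}\fX\gamma_1\fX\cdots\fX\gamma_m\in F_{d-1}$. The Sperner condition is exactly what forbids a dangling door: in the input-change case $b$ acts last in every round, so the remaining core processes never observe colour $b$ and their sub-vertices lie on the facet of $I_d'$ opposite the colour-$b$ vertex, whose original values are $[0,d]\setminus\{b\}$. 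For $b\neq d$ the Sperner condition then keeps their outputs away from the value $b\in\{0,\dots,d-1\}$, so $\DEC{[0,d]\setminus\{b\}}$ fails; an active door with $b\neq d$ is thus never in the input-change case, hence interior and handled by the flip case. For $b=d$ the same facet is $I_{d-1}'$, values $[0,d-1]$, compatible with $\DEC{[0,d-1]}$, and yields the down-neighbour in $F_{d-1}$. When $\sigma$ is fully coloured, $\DEC{[0,d]}$ additionally activates the up-door $A=[0,d]$, and Proposition~\ref{prop:opsincident} (input-change case, $b=d+1$) produces the unique partner $I_{d+1}\fX\gamma_1\fX\cdots\fX\gamma_m\in F_{d+1}$; here $\AGREE{k}$ keeps the bookkeeping honest, since a fully coloured facet at level $d=k$ would use $k+1$ distinct values, contradicting $\AGREE{k}$, so the up-door is only ever needed for $d<k$. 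In each case $\KNOW$ carries $\DEC{A}$ to the partner, so the door is a genuine $\Rconn{A}$-edge, and distinct doors yield distinct partners.

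Assembling the two counts gives the claim: zero doors yield degree $0$; two doors yield degree $2$ with no up-edge; and a fully coloured $\sigma$ contributes its single lower door together with the up-edge, again degree $2$. The step I expect to be the main obstacle is the Sperner condition of the second paragraph together with its use to rule out dangling boundary doors, since this is where the argument must go beyond the purely combinatorial incidence of Proposition~\ref{prop:opsincident} and genuinely exploit that the outputs arise from a morphism $\delta$ constrained by $\VALID$ and $\KNOW$; a close second is the parity bookkeeping that matches the extra up-edge of a fully coloured simplex to its unique lower door and suppresses it at level $k$ via $\AGREE{k}$.
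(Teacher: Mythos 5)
Your proposal is correct, and its skeleton matches the paper's proof: both classify the admissible labels $A$, read $\DEC{A}$ as a door condition on the output colouring $f_\sigma(a) = $ the value decided by $a$ (yielding the trichotomy: no doors / two side doors / fully coloured with one side door plus the up-door $A=[0,d]$), invoke Proposition~\ref{prop:opsincident} to attach a unique partner facet to each active door, transfer $\DEC{A}$ to the partner via $\KNOW$, and use $\AGREE{k}$ to rule out a fully coloured facet at level $d=k$. The one genuinely different ingredient is how you exclude dangling doors. The paper does this locally: in the offending configuration ($A=[0,d]\setminus\{c\}$ with $c\in[0,d-1]$ and $A_{j,r_j}=\{c\}$ in every round) it constructs a single witness facet $J\fX\gamma_1\fX\cdots\fX\gamma_m$, with $J$ obtained from $I_d$ by replacing $c$'s input by $d$, and derives a contradiction from $\KNOW$ together with $\VALID$ at that one facet. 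You instead first prove a global Sperner-colouring lemma --- each process's decision lies among the input values of the carrier of its view, obtained by intersecting $\VALID$ over all $\relD{}{\{a\}}$-related facets --- and then read off the exclusion. Both arguments exploit exactly the same mechanism (model-wide validity of $\VALID$ at facets lying outside $\bigcup_{e} F_e$, reached through $\KNOW$), so yours is not weaker; it is more modular and makes the Sperner structure explicit, and it proves slightly more than is needed, at the price of having to verify the carrier claim (that non-carrier inputs can be varied arbitrarily without changing $a$'s view), which the paper's single-facet construction sidesteps. One small misattribution: your door-count dichotomy needs only $\OFUN\wedge\VALID$ (decisions of processes $[0,d]$ land in $[0,d]$, since the inputs of $I_d$ are exactly $[0,d]$), not the Sperner condition; the latter is genuinely needed only where you actually use it, namely to show that an active door $[0,d]\setminus\{b\}$ with $b\neq d$ can never be in the input-change case of Proposition~\ref{prop:opsincident}.
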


\begin{proof}
    Given $\sigma \in F_d$ ($0<d\leq k$) such that 
    $\fchange{\IIS{m}}{\delta}, \sigma \models \OFUN\wedge\VALID\wedge\AGREE{k}\wedge\KNOW$, 
    let us define $f_{\sigma}(a)=b$ iff $\fchange{\IIS{m}}{\delta},\sigma\models \Pdecide{a}{b}$. 
    Then $f_{\sigma}$ is a well-defined total function from $[0,n]$ to $[0,d]$, because  
    $\fchange{\IIS{m}}{\delta},\sigma \models \OFUN\wedge \VALID$.
    
    Consider the case $[0,d-1]\not\subseteq f_{\sigma}([0,d])$. 
    Assume, by contradiction, $\sigma \Rconn{A}\tau$ holds for some $A$ and
    $\tau\in F_e$ ($0< e\leq k$).  
    Then $[0,d-1]\subseteq f_{\sigma}(A)$ 
    holds from $\fchange{\IIS{m}}{\delta},\sigma \models \DEC{A}$.
    If either $e =d-1$ or $e=d$, $\sigma \Rconn{A}\tau$ implies $A\subseteq [0,d]$;
    If $e=d+1$, 
    $\sigma \relD{}{A} \tau$ implies $A\subseteq [0,d]$, 
    because $\sigma$ and $\tau$ are a subdivision of $I_d$ and $I_{d+1}$, respectively, 
    and $\coloring(I_d \cap I_{d+1})=[0,d]$. 
    In either case, we have $[0,d-1] \subseteq f_{\sigma}(A)\subseteq f_{\sigma}([0,d])$,
    a contradiction. 
    Therefore $\sigma \Rconn{A} \tau$ does not hold for any $\tau$ and $A$.

    Consider the other case $[0,d-1]\subseteq f_{\sigma}([0,d])$.
    Then, either of the following hold: 
    \begin{enumerate}[(a)] 
        \item  \label{case:full} $d\in f_{\sigma}([0,d])$. 
        In this case, $f_{\sigma}$ is a bijection on $[0,d]$. 
        In order for $\tau \Rconn{A} \sigma$ to hold for some $\sigma$, 
        $A$ must be either $B_0=[0,d]$ or $B_1$ such that 
        such that $B_1\subseteq [0,d]$, $\abs{B_1}=d$, and 
        $f_{\sigma}(B_1)=[0,d-1]$.
        (This implies $d<k$, because $\fchange{\IIS{m}}{\delta},\sigma\models \AGREE{k}$.)

        \item \label{case:twofaces} $d\not\in f_{\sigma}([0,d])$. 
        In this case, $f_{\sigma}$ is a surjection from $[0,d]$ to $[0,d-1]$. 
        In order for $\tau \Rconn{A} \sigma$ to hold for some $\sigma$, 
        $A$ must be either of $B_2, B_3 \subseteq [0,d]$ such that
        $\abs{B_2}=\abs{B_3}=d$, 
        $\abs{B_2\cap B_3}=d-1$,  and $f_{\sigma}(B_2)=f_{\sigma}(B_3)=[0,d-1]$. 
    \end{enumerate}

    Suppose 
    $\sigma = I_d \fX \gamma_1 \fX  \cdots \fX$ $\gamma_m$ is a facet in $F_d$, 
    where $\gamma_j = \osp{A_{j,1}{\mid}\cdots {\mid} A_{j,r_j} {\mid} d{+}1 {\mid}\cdots {\mid} n}$
    for each $j$. 
    Let us show that $\sigma$ has a unique facet related by $\Rconn{B_i}$, for each $i\in \{0,1\}$
    in the case of~(\ref{case:full}) and also for each 
    $i\in \{2,3\}$ in the case of~(\ref{case:twofaces}).
    For the case of $B_0$,
    $\tau = I_{e}\fX \gamma_1 \fX  \cdots \fX \gamma_m \in F_e$ is the sole facet satisfying 
    $\sigma \relD{}{B_0} \tau$, because $\sigma\relD{}{B_0}\tau'$ does not hold for any
    facet $\tau'\in F_d$ by Proposition~\ref{prop:opsincident}. 
    For the remining cases,  we show 
    $B\subseteq [0,d]$, $\abs{B}=d$, and $f_{\sigma}(B)=[0,d-1]$ holds, where 
    $B$ is either $B_1, B_2$, or $B_3$.     
    By Proposition~\ref{prop:opsincident}, we have three cases to consider.
    \begin{itemize}
        \item $B=[0,d-1]$ and $A_{j,r_j}= \{d\}$ for each $j$. \\
        There is no facet $\tau\in F_{d}$ satisfying $\sigma \relD{}{B} \tau$
        by Proposition~\ref{prop:opsincident}. Hence
        the sole facet satisfying $\sigma \relD{}{B} \tau$ is 
        $\tau = I_{e}\fX \gamma_1 \fX  \cdots \fX \gamma_m \in F_{e}$.
        \item $B=[0,d]\setminus \{c\}$ for some $c\in [0,d-1]$ and $A_{j,r_j}= \{c\}$ for each $j$. \\
        This case cannot happen. Suppose there exists such $c$. 
        Define $J= \{(i,v) \mid (i,v)\in I_d, i\neq c\}\cup \{ (c,d)\}$. 
        Then $I_d \relD{}{B} J$ and hence $\sigma \relD{}{B} \tau$, 
        where $\tau = J\fX \gamma_1 \fX  \cdots \fX \gamma_m$. 
        Since $f_{\sigma}(B)=[0,d-1]$, it holds that 
        $\fchange{\IIS{m}}{\delta},\sigma\models \Pdecide{a}{c}$ for some $a\in B$.
        Then, $\fchange{\IIS{m}}{\delta},\sigma\models\KNOW$ implies 
        $\fchange{\IIS{m}}{\delta},\tau\models \Pdecide{a}{c}$. 
        However, since $\fchange{\IIS{m}}{\delta},\tau\models\VALID$, we have
        $\fchange{\IIS{m}}{\delta},\tau\models\Pinput{b}{c}$ for some $b$, 
        which contradicts to the definition of $J$. 
        \item There exists $j$ such that either $\abs{A_{j,r_j}}>1$
        or $B=[0,d]\setminus A_{j,t}$ where $A_{j,t}= \{c\}$ for some $t$ ($1\leq t< r_j$) and 
        $c\in [0,d]$. \\
        Let $\ell$ be the minimum of such $j$. 
        Then, by Proposition~\ref{prop:opsincident}, the sole facet satisfying $\sigma \relD{}{B} \tau$
        is $\tau = I_{d}\fX \gamma_1' \fX  \cdots \fX \gamma_m' \in F_{d}$,
        where $\gamma_\ell' = \flip{B}{\gamma_\ell}$ and $\gamma_i'=\gamma_i$ for every 
        $i$ ($i\neq \ell$).         
    \end{itemize}

    To see $\sigma \Rconn{B_i} \tau$ holds for each $i\in \{0,1,2,3\}$, it remains to show
    $\fchange{\IIS{m}}{\delta},\sigma \models \DEC{B_i}$  and
    $\fchange{\IIS{m}}{\delta},\tau \models \DEC{B_i}$. 
    The former immediately follows from the definition of $f_{\sigma}$; 
    The latter follows from the former and 
    $\fchange{\IIS{m}}{\delta},\sigma\models \KNOW$. 
\end{proof}

\begin{theorem} \label{th:IISm}
    Let $\ISA{k}$ and $\IIS{m}$ be the product update models for $k$-set agreement task    
    and $m$-iterated immediate snapshot protocol, respectively.
    If $k\leq n$, there exists no morphism $\delta$ from $\IIS{m}$ to $\ISA{k}$, meaning that
    $k$-set agreement task is not solvable by the iterated immediate snapshot protocol. 
\end{theorem}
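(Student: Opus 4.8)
The plan is to argue by contradiction, taking the greatest fixpoint formula $\Phi_k$ of \eqref{eq:logicalobstruction} as the obstruction transported from the task model to the protocol model. First I would assume a morphism $\delta\colon\IIS{m}\to\ISA{k}$ exists. Because $\OFUN$, $\VALID$, $\KNOW$, and $\AGREE{k}$ are valid in $\PUfc{\ISA{k}}$, so is $\Phi_k$; and since $\Phi_k$ is a closed positive formula, Theorem~\ref{th:PUlogical} transports this validity to $\IISfc{m}{\delta}$, so in particular $\IISfc{m}{\delta},\sigma_0\models\Phi_k$. Everything then reduces to refuting this single satisfaction fact by showing that it compels an infinite, cycle-free $\Rconn{}$-path inside the finite collection $\bigcup_{d=0}^{k}F_d$.

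For the degree bookkeeping I would first observe that each of $\OFUN$, $\VALID$, $\KNOW$, $\AGREE{k}$ is individually valid in $\IISfc{m}{\delta}$ (again by Theorem~\ref{th:PUlogical}), so every facet of $\bigcup_{d=1}^{k}F_d$ satisfies the hypotheses of Proposition~\ref{prop:bidegree} and hence has $\Rconn{}$-degree $0$ or $2$, whereas Lemma~\ref{lemma:Fzero} gives $F_0=\{\sigma_0\}$ with $\Rconn{}$-degree exactly $1$. This is the parity anomaly that ultimately yields the contradiction: $\sigma_0$ would be the unique vertex of odd degree in a finite graph whose other vertices all have even degree.

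The logical step is to realise this anomaly through the fixpoint. Starting at $\sigma_0$, I note $\sigma_0\models\DEC{\{0\}}$ because all inputs in $I_0$ equal $0$, so $\VALID$ and $\OFUN$ force every output to be $0$; unfolding $\Phi_k$ at $\sigma_0$ with $A=\{0\}$ and the unique neighbour $\sigma_1\in F_1$ of Lemma~\ref{lemma:Fzero} then gives $\sigma_1\models\Phi_k$. Inductively, whenever the walk reaches $\sigma_i\in F_{d_i}$ ($1\le d_i\le k$) with $\sigma_i\models\Phi_k$ along an edge $\sigma_{i-1}\Rconn{A_{i-1}}\sigma_i$, Proposition~\ref{prop:bidegree} forces $\sigma_i$ to have degree exactly $2$ (it already has one neighbour), so there is a unique neighbour $\sigma_{i+1}\neq\sigma_{i-1}$ with $\sigma_i\Rconn{A_i}\sigma_{i+1}$; as this edge supplies $\sigma_i\relD{}{A_i}\sigma_{i+1}$ and $\sigma_i\models\DEC{A_i}$, the $\ModD{A_i}(\cdots\wedge Z)$ conjunct of the fixpoint body at $\sigma_i$ delivers $\sigma_{i+1}\models\Phi_k$. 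This yields an unending walk $\sigma_0,\sigma_1,\sigma_2,\dots$; since every vertex beyond $\sigma_0$ has degree exactly $2$ and the walk never backtracks, each such vertex has both its edges consumed at once and cannot be revisited, so the walk is a simple path---impossible within the finite set $\bigcup_{d=0}^{k}F_d$. This contradiction refutes $\delta$, proving the theorem; the hypothesis $k\le n$ is what keeps the dimensions $0,\dots,k$, hence the faces $I_0,\dots,I_k$ and the collections $F_0,\dots,F_k$, available and the agreement constraint nontrivial.

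The hard part will be the precise bridge between the graph relation $\Rconn{A}$, governed by Proposition~\ref{prop:bidegree}, and the modal unfolding of the fixpoint: one must check that each edge $\sigma_i\Rconn{A_i}\sigma_{i+1}$ furnishes exactly the hypotheses ($\sigma_i\relD{}{A_i}\sigma_{i+1}$ and $\sigma_i\models\DEC{A_i}$) needed to fire the $\ModD{A_i}$ conjunct, and that the ``degree $2$'' of Proposition~\ref{prop:bidegree} hands back a genuinely new facet rather than the one just traversed. The other delicate point, already localised inside Proposition~\ref{prop:bidegree} rather than here, is the top-dimensional case $d_i=k$, where $\AGREE{k}$ must exclude a full $(k{+}1)$-colouring; this is exactly where the bound $k\le n$ and the agreement constraint interact.
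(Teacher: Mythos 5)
Your proposal is correct and follows essentially the same route as the paper's proof: transport $\Phi_k$ to $\IISfc{m}{\delta}$ via Theorem~\ref{th:PUlogical}, seed the walk at $\sigma_0$ using Lemma~\ref{lemma:Fzero}, extend it indefinitely without backtracking via Proposition~\ref{prop:bidegree} together with unfolding of the $\ModD{A}$ conjuncts of the fixpoint, and derive a contradiction with the finiteness of $\bigcup_{d=0}^{k}F_d$. The only cosmetic differences are that you obtain the hypotheses of Proposition~\ref{prop:bidegree} by transporting $\OFUN$, $\VALID$, $\KNOW$, $\AGREE{k}$ individually rather than threading them through the fixpoint unfolding as the paper does, and that your compressed ``non-backtracking walk in a degree-$\leq 2$ graph is simple'' step corresponds to the paper's explicit induction on path length with its pairwise-distinctness case analysis.
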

\begin{proof}
    Suppose, by contradiction, there exists such a morphism $\delta$.
    By Theorem~\ref{th:PUlogical}, we have 
    $\PUfc{\ISA{k}}\models \Phi_k$, and hence $\IISfc{m}{\delta} \models \Phi_k$.    
    In particular, $\IISfc{m}{\delta} \models \VALID$.


    Hence it must hold that $\IISfc{m}{\delta}, \sigma_0 \models \Phi_k$,
    where $\sigma_0$ is the sole facet of $F_0$. 
    Unfolding $\nu Z$ once, we have
    $\IISfc{m}{\delta}, \sigma_0 \models \OFUN \wedge \VALID$, which 
    implies $\IISfc{m}{\delta}, \sigma_0 \models \Pdecide{0}{0}$ and hence
    $\IISfc{m}{\delta}, \sigma_0 \models \DEC{\{0\}}$.
    Then, by Lemma~\ref{lemma:Fzero}, we can see
    there exists $\sigma_1\in F_1$ such that $\sigma_0 \Rconn{\{0\}}\sigma_1$
    and hence $\IISfc{m}{\delta}, \sigma_1 \models 
    \OFUN \wedge \VALID \wedge \KNOW \wedge \AGREE{k} \wedge \Phi_k$.

    Let us show that, for every $\ell\geq 1$, there exists a path of facets 
    $\sigma_0 \Rconn{A_0} \sigma_1 \Rconn{A_1} \cdots 
    \Rconn{A_{\ell-1}} \sigma_\ell$ such that 
    $\sigma_0$, $\sigma_1, \ldots, \sigma_\ell$ are pairwisely distinct facets
    in $\bigcup_{i=0}^k F_k$ 
    and also $\IISfc{m}{\delta}, \sigma_i \models 
    \OFUN \wedge \VALID \wedge \KNOW \wedge \AGREE{k} \wedge \Phi_k$ holds
    for every $\sigma_i$.
    We show this by induction on $\ell$. 
    We have already examined the case $l=1$. Suppose the claim holds up to~$\ell$,
    that is, $\sigma_{\ell-1} \Rconn{A_{\ell-1}} \sigma_l$ and $\IISfc{m}{\delta}, \sigma_\ell \models 
    \OFUN \wedge \VALID \wedge \KNOW \wedge \AGREE{k} \wedge \Phi_k$. 
    By Proposition~\ref{prop:bidegree}, $\sigma_\ell$ must have 
    a facet $\sigma_{\ell+1}$, other than $\sigma_{\ell-1}$,  such that
    $\sigma_{\ell} \Rconn{A_{\ell}} \sigma_{\ell+1}$ for an appropriate $A_\ell$.
    The relation $\sigma_{\ell} \Rconn{A_{\ell}} \sigma_{\ell+1}$ implies 
    $\IISfc{m}{\delta}, \sigma_\ell \models \DEC{A_\ell}$
    and hence we have $\IISfc{m}{\delta}, \sigma_{\ell+1}\models
    \OFUN \wedge \VALID \wedge \KNOW \wedge \AGREE{k} \wedge \Phi_k$ again 
    for $\sigma_{\ell+1}$, by unfolding~$\nu Z$.

    Finally, let us show that $\sigma_{\ell+1}$ is distinct from any other preceding facets. 
    If $\ell=1$, this immediately follows from Proposition~\ref{prop:bidegree}.
    Suppose $\ell>1$ and $\sigma_{\ell+1} = \sigma_{j}$ for some $j$ ($1\leq j<\ell-1$). 
    Then, by $\sigma_{j-1} \Rconn{A_{j}} \sigma_j \Rconn{A_{j+1}} \sigma_{j+1}$
    and Proposition~\ref{prop:bidegree},  
    either $\sigma_{j-1}$ or $\sigma_{j+1}$ must have been $\sigma_{\ell}$, which contradicts 
    the induction hypothesis. 

    The existence of path of distinct facets of arbitrary length
    contradicts to the fact that $\bigcup_{i=0}^k F_k$ is a finite set.
    Therefore there exists no morphism $\delta$. 
\end{proof}


\section{Set Agreement in Submodel: \texorpdfstring{$k$-Concurrency}{k-Concurrency}}
\label{sec:kconc}

\begin{figure}
    \centering
    \includegraphics[width=0.33\textwidth]{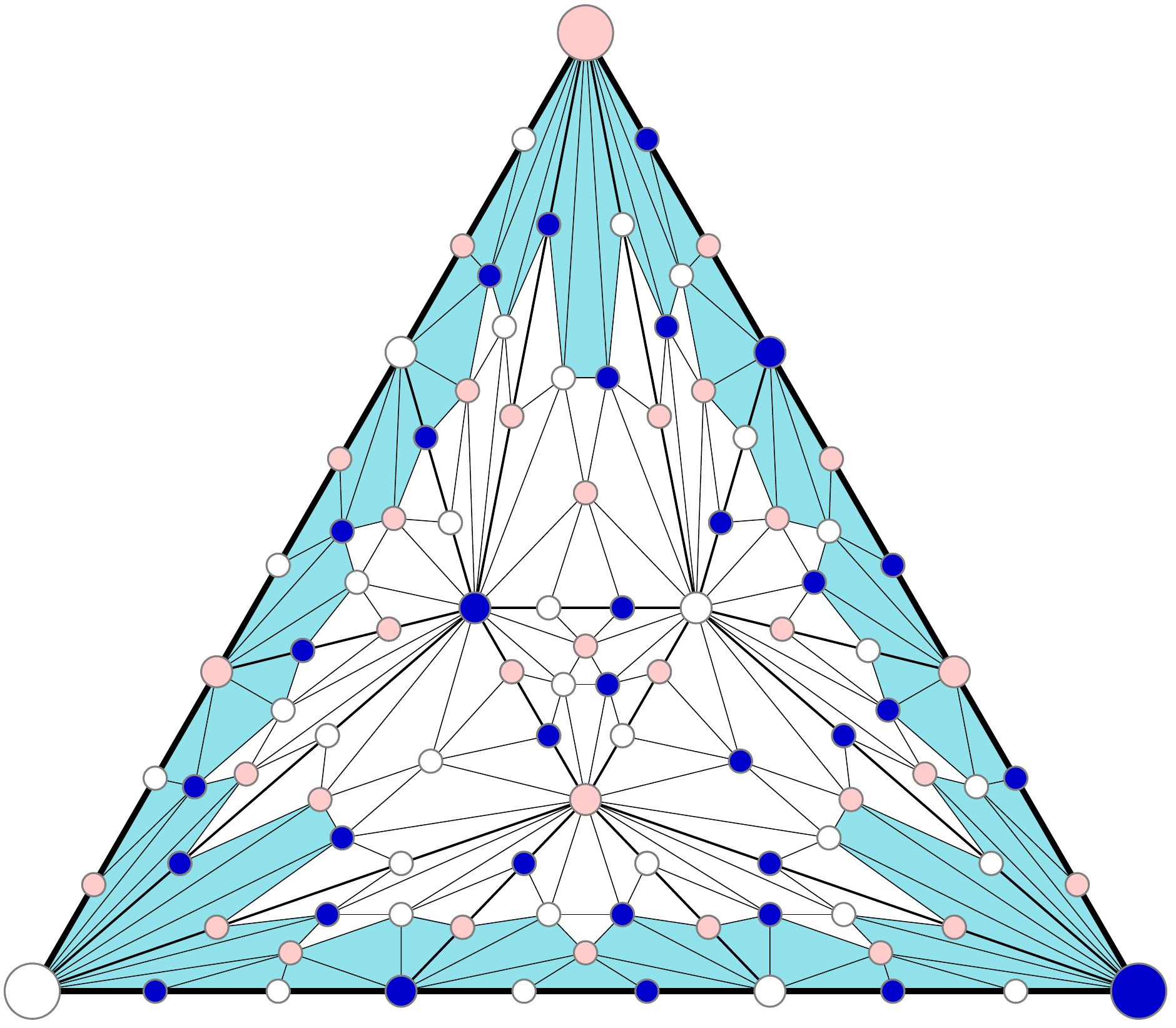}
    \caption{The 2-concurrency model \texorpdfstring{$\kconc{2}$}{R2} for a 3-process system, 
    a submodel of $\IIS{2}$ restricted to the shaded facets}
    \label{fig:kconc}
\end{figure}

The unsolvability argument carried out in Section~\ref{sec:IISunsolvable}
can be applied to a submodel of $\IIS{m}$, 
if the submodel keeps a relevant set of facets in $\IIS{m}$. 
In this section, let us examine $k$-concurrency model \cite{GafniHeKuznetsovRieutord16}, 
as an instance of such a submodel. 

The \keywd{$k$-concurrency model}, denoted by $\kconc{k}$, is a submodel of $\IIS{2}$, 
where the facets in $\IIS{2}$ are confined to those 
corresponding to an execution in which 
at most $k$ out of $n+1$ concurrent processes are allowed to execute
the two consecutive iterations of immediate snapshot protocol simultaneously.

Following \cite{GafniHeKuznetsovRieutord16}, 
we formally define the $k$-concurrency submodel $\kconc{k}$ as follows.
For an ordered set partition $\gamma=
\osp{A_1 {\mid} A_2{\mid} \cdots{\mid} A_r}$
of $\colorSet$ and $a\in \colorSet$, 
we define $\viewOf{a}(\gamma)= \bigcup_{i=1}^{t} A_i$, where $a\in A_t$.  
Also, for a facet $X\fX\gamma_1\fX\gamma_2 \in \Facet(\IIS{2})$
and a vertex of color $a\in \colorSet$, 
we define the \keywd{carrier set} for $a$
by $\carrier{a}(X\fX\gamma_1\fX\gamma_2) =\bigcup_{b\in\viewOf{a}(\gamma_2)}\viewOf{b}(\gamma_1)$.
The contention sets \cite{GafniHeKuznetsovRieutord16} for a facet $\sigma\in \IIS{2}$ are defined by
$\Contention(\sigma) = \{ A\subseteq \colorSet \mid \carrier{a}(\sigma)
=\bigcup_{b\in A} \carrier{b}(\sigma)$ for every $a\in A\}$. 
Then the $k$-concurrency model $\kconc{k}$ is a submodel of $\IIS{2}$ with
a restricted set of facets
$\Facet(\kconc{k}) = 
    \bigl\{ \sigma\in \IIS{2} \bigm\vert \abs{A}\leq k \text{ for every $A\in\Contention(\sigma)$} \bigr\}$.

Figure~\ref{fig:kconc} illustrates 
the $2$-concurrency submodel $\kconc{2}$ for a 3-process system. 
The submodel contains all the facets relevant for the unsolvability argument 
for $1$-set consensus, namely, $F_0$ and $F_1$ in Section~\ref{sec:IISunsolvable}. 
We can formally check this containment property as follows.
Remember that all the facets considered 
in the previous section for $k$-set agreement (with $m=2$) has 
the form $I_d\fX \gamma_1\fX \gamma_2$ ($d\leq k$) 
where both of the ordered set partitions $\gamma_1, \gamma_2$ are of the form
$\osp{A_1 {\mid} A_2{\mid} \cdots{\mid} A_r {\mid} \,d+1\,  {\mid}\cdots{\mid} n}$. 
For such a facet $\sigma$, $\carrier{a}(\sigma)= [0,a]$ if $a\in[d+1,n]$,
while $\carrier{a}(\sigma)\subseteq [0,d]$ if $a\in[0,d]$.
Therefore 
$\abs{A}\leq d+1\leq k+1$ for every $A\in\Contention(\sigma)$
and hence $I_d\fX \gamma_1\fX \gamma_2\in \Facet(\kconc{k+1})$.

By the same argument in the preceding section, 
we can show that $k-1$ (or fewer) set agreement task is not solvable
in the $k$-concurrency model.
\begin{theorem} 
    Let $\ISA{\ell}$ be the product update model for $\ell$-set agreement task,
    $\kconc{k}$ be the $k$-concurrency model.
    Then there is no morphism $\delta$ from $\kconc{k}$ to $\ISA{\ell}$, 
    if $\ell <k$. 
\end{theorem}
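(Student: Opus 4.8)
The plan is to mirror the argument of Theorem~\ref{th:IISm}, now specialized to $m=2$ and with the dimension ceiling played by $\ell$ instead of $k$, and then to verify that passing from $\IIS{2}$ to the submodel $\kconc{k}$ discards none of the facets or relations on which the Sperner-style obstruction relies. First I would assume, for contradiction, that a morphism $\delta:\kconc{k}\to\ISA{\ell}$ exists. Since $\OFUN$, $\VALID$, $\KNOW$, and $\AGREE{\ell}$ are all valid in $\PUfc{\ISA{\ell}}$, the formula $\Phi_\ell$ obtained from \eqref{eq:logicalobstruction} by replacing $k$ with $\ell$ is valid in $\PUfc{\ISA{\ell}}$, and hence Theorem~\ref{th:PUlogical} yields $\fchange{\kconc{k}}{\delta}\models\Phi_\ell$; in particular $\fchange{\kconc{k}}{\delta}\models\VALID$.

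Next I would reuse the collections $F_0,\ldots,F_\ell$ and the relation $\Rconn{A}$ exactly as defined in Section~\ref{sec:IISunsolvable}, taking $m=2$. The decisive containment step is already available above: for any facet $\sigma=I_d\fX\gamma_1\fX\gamma_2$ in $F_d$ with $d\leq\ell$, every contention set $A\in\Contention(\sigma)$ satisfies $\abs{A}\leq d+1\leq\ell+1\leq k$, where the last inequality uses the hypothesis $\ell<k$. Hence $\sigma\in\Facet(\kconc{k})$, so that $\bigcup_{d=0}^{\ell}F_d\subseteq\Facet(\kconc{k})$. Because $\kconc{k}$ is obtained from $\IIS{2}$ purely by restricting the set of facets, its indistinguishability relations — and therefore the relation $\Rconn{A}$ over $\bigcup_{d=0}^{\ell}F_d$ — coincide with those of $\IIS{2}$ on the retained facets.

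The remaining point to secure is that the two neighbors asserted by the local analysis survive the restriction. Here I would observe that each neighbor $\tau$ produced in Proposition~\ref{prop:bidegree} (via Proposition~\ref{prop:opsincident}) is again of the form $I_e\fX\gamma_1'\fX\gamma_2'$ with $e\in\{d-1,d,d+1\}$ and $e\leq\ell$, the upward edge into $F_{\ell+1}$ being forbidden by $\AGREE{\ell}$ precisely when $d=\ell$, exactly as in the proof of Proposition~\ref{prop:bidegree}. Thus $\tau\in\bigcup_{d=0}^{\ell}F_d\subseteq\Facet(\kconc{k})$, so no $\Rconn{A}$-neighbor is ever dropped and the degree count of $0$ or $2$ is preserved inside the submodel. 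Consequently Lemma~\ref{lemma:Fzero}, Proposition~\ref{prop:bidegree}, and the inductive path construction of Theorem~\ref{th:IISm} apply verbatim in $\fchange{\kconc{k}}{\delta}$, producing a path of pairwise distinct facets of arbitrary length in the finite set $\bigcup_{d=0}^{\ell}F_d$, which is the desired contradiction.

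I expect the main obstacle to be exactly this closure bookkeeping: confirming that every facet on the obstruction path, together with both neighbors guaranteed by Proposition~\ref{prop:bidegree}, remains inside $\kconc{k}$ rather than being deleted by the $k$-concurrency restriction. Once the containment $\bigcup_{d=0}^{\ell}F_d\subseteq\Facet(\kconc{k})$ and the preservation of $\Rconn{A}$-neighbors are established, the theorem follows as a direct transcription of the multi-round unsolvability argument.
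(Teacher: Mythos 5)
Your proposal is correct and takes essentially the same route as the paper: the paper's own proof is precisely this reduction to the Section~\ref{sec:IISunsolvable} argument, justified by the same contention-set computation showing $\bigcup_{d\leq\ell}F_d\subseteq\Facet(\kconc{\ell+1})\subseteq\Facet(\kconc{k})$ whenever $\ell<k$. The one detail both you and the paper leave implicit is that the auxiliary facet $\tau=J\fX\gamma_1\fX\gamma_2$ used to exclude the degenerate sub-case in the proof of Proposition~\ref{prop:bidegree} is not of the form $I_e\fX\gamma_1'\fX\gamma_2'$, yet it too survives the $k$-concurrency restriction, since $\carrier{a}$ (and hence $\Contention$) depends only on the ordered set partitions and not on the input facet.
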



\bibliographystyle{plain}

\bibliography{distrib}

\begin{thebibliography}{10}

\bibitem{BaltagMossSolecki16}
Alexandru Baltag, Lawrence~S. Moss, and S{\l}awomir Solecki.
\newblock The logic of public announcements, common knowledge, and private
  suspicions.
\newblock In Horacio Arl\'{o}-Costa, Vincent~F. Hendricks, and Johan van
  Benthem, editors, {\em Readings in Formal Epistemology: Sourcebook},
  chapter~39, pages 773--812. Springer, 2016.

\bibitem{BatlagSmets20}
Alexandru Baltag and Sonja Smets.
\newblock Learning what others know.
\newblock In {\em {LPAR}-23: 23rd International Conference on Logic for
  Programming, Artificial Intelligence and Reasoning}, volume~73 of {\em {EPiC}
  Series in Computing}, pages 90--119, 2020.

\bibitem{BorowskyGafni:PODC93}
Elizabeth Borowsky and Eli Gafni.
\newblock Immediate atomic snapshots and fast renaming (extended abstract).
\newblock In {\em Proceedings of the 12th Annual {ACM} Symposium on Principles
  of Distributed Computing}, pages 41--51. {ACM}, 1993.

\bibitem{BradfieldStriling:ModealMu07}
Julian~C. Bradfield and Colin Stirling.
\newblock Modal mu-calculi.
\newblock In Patrick Blackburn, J.~F. A.~K. van Benthem, and Frank Wolter,
  editors, {\em Handbook of Modal Logic}, volume~3 of {\em Studies in logic and
  practical reasoning}, pages 721--756. {North-Holland}, 2007.

\bibitem{Longueville13}
Mark de~Longueville.
\newblock {\em A Course in Topological Combinatorics}.
\newblock Universitext. Springer, 2013.

\bibitem{FaginHalpernMosesVardi:book95}
Ronald Fagin, Joseph~Y. Halpern, Yoram Moses, and Moshe~Y. Vardi.
\newblock {\em Reasoning About Knowledge}.
\newblock {MIT} Press, 1995.

\bibitem{GafniHeKuznetsovRieutord16}
Eli Gafni, Yuan He, Petr Kuznetsov, and Thibault Rieutord.
\newblock Read-write memory and $k$-set consensus as an affine task.
\newblock In {\em 20th International Conference on Principles of Distributed
  Systems ({OPODIS} 2016)}, pages 6:1--6:17, 2016.

\bibitem{GoubaultLedentRajsbaum:InfComput20}
{\'{E}}ric Goubault, J{\'{e}}r{\'{e}}my Ledent, and Sergio Rajsbaum.
\newblock A simplicial complex model for dynamic epistemic logic to study
  distributed task computability.
\newblock {\em Information and Computation}, 278:104597, 2021.
\newblock An earlier version appeared in Proc.\ of 9th International Symposium
  on Games, Automata, Logics, and Formal Verification, {GandALF} 2018.

\bibitem{GuerraouiKuznetsov:AlgConcurrentSys}
Rachid Guerraoui and Petr Kuznetsov.
\newblock {\em Algorithms for Concurrent Systems}.
\newblock {EPFL} press, 2018.

\bibitem{HalpernMoses90}
Joseph~Y. Halpern and Yoram Moses.
\newblock Knowledge and common knowledge in a distributed environment.
\newblock {\em Journal of the {ACM}}, 37(3):549--587, 1990.

\bibitem{Herlihy:DCTopology}
Maurice Herlihy, Dmitry~N. Kozlov, and Sergio Rajsbaum.
\newblock {\em Distributed Computing Through Combinatorial Topology}.
\newblock {Morgan} {Kaufmann}, 2013.

\bibitem{HerlihyRajsbaum00}
Maurice Herlihy and Sergio Rajsbaum.
\newblock Algebraic spans.
\newblock {\em Mathematical Structures in Computer Science}, 10(4):549--573,
  2000.

\bibitem{HerlihyShavit99}
Maurice Herlihy and Nir Shavit.
\newblock The topological structure of asynchronous computability.
\newblock {\em Journal of the {ACM}}, 46(6):858--923, 1999.

\bibitem{Hoshino22}
Sou Hoshino.
\newblock Determining existence of logical obstructions to the distributed task
  solvability, 2022.

\bibitem{Kozen83}
Dexter Kozen.
\newblock Results on the propositional $\mu$-calculus.
\newblock {\em Theoretical Computer Science}, 27(3):333--354, 1983.

\bibitem{Kozlov12}
Dmitry~N. Kozlov.
\newblock Chromatic subdivision of a simplicial complex.
\newblock {\em Homology, Homotopy and Applications}, 14(2):197--209, 2012.

\bibitem{Nishida:Msc20}
Yutaro Nishida.
\newblock Impossibility of $k$-set agreement via dynamic epistemic logic (in
  {Japanese}).
\newblock In {\em Algebraic system, Logic, Language and Related Areas in
  Computer Sciences {II}}, volume 2188 of {\em {RIMS} {K\^{o}ky\^{u}roku}},
  pages 96--105, Feb. 2020.

\bibitem{Nishimura:OPODIS17}
Susumu Nishimura.
\newblock Schlegel diagram and optimizable immediate snapshot protocol.
\newblock In {\em 21st International Conference on Principles of Distributed
  Systems, {OPODIS} 2017}, pages 22:1--22:16, 2017.

\bibitem{ShilovGaranina02}
Nikolay~V. Shilov and Natalya~Olegovna Garanina.
\newblock Model checking knowledge and fixpoints.
\newblock In {\em Fixed Points in Computer Science, {FICS} 2002}, volume
  {NS-02-2} of {\em {BRICS} Notes Series}, pages 25--39, 2002.

\bibitem{Sperner28}
Emanuel Sperner.
\newblock Neuer beweis f\"{u}r die invarianz der dimensionszahl und des
  gebietes.
\newblock {\em {Abhandlungen} {Hamburg}}, 6:265--272, 1928.

\bibitem{DitmarschGoubaultLazicLedentRajsbaum21}
Hans van Ditmarsch, {\'{E}}ric Goubault, Marijana Lazic, J{\'{e}}r{\'{e}}my
  Ledent, and Sergio Rajsbaum.
\newblock A dynamic epistemic logic analysis of equality negation and other
  epistemic covering tasks.
\newblock {\em Journal of Logical and Algebraic Methods in Programming},
  121:100662, 2021.

\bibitem{DitmarschHoekKooi:DELbook08}
Hans van Ditmarsch, Wiebe van~der Hoek, and Barteld Kooi.
\newblock {\em Dynamic Epistemic Logic}, volume 337 of {\em Synthese Library}.
\newblock Springer, 2008.

\bibitem{VelazquezCastanedaRosenblueth21}
Diego~A. Vel{\'{a}}zquez, Armando Casta{\~{n}}eda, and David~A. Rosenblueth.
\newblock Communication pattern models: An extension of action models for
  dynamic-network distributed systems.
\newblock In {\em Proceedings Eighteenth Conference on Theoretical Aspects of
  Rationality and Knowledge, {TARK} 2021}, volume 335 of {\em {EPTCS}}, pages
  307--321, 2021.

\bibitem{YagiNishimura:arXiv20}
Koki Yagi and Susumu Nishimura.
\newblock Logical obstruction to set agreement tasks for superset-closed
  adversaries, 2020.

\end{thebibliography}

\end{document}